\documentclass[journal,twocolumn]{IEEEtran}
\pdfoutput=1
\newlength{\figwidth}
\usepackage{xcolor}
\definecolor{links}{rgb}{0.7,0,0}   
\definecolor{urls}{rgb}{0,0,0.8}    
\definecolor{cites}{rgb}{0,0,0.8}   
\usepackage[colorlinks,hyperindex,linkcolor=links,citecolor=cites,urlcolor=urls]{hyperref} 

\usepackage[nosort]{cite}
\usepackage{url}
\usepackage[intlimits]{amsmath}
\usepackage{bbm}
\usepackage{graphicx}
\usepackage{paralist}
\usepackage{fancyref}
\usepackage[stretch=16,shrink=16,step=4]{microtype}
\usepackage{vmr-symbols-vecbold}
\usepackage{standard-macros}
\usepackage{siunitx}
\usepackage{tikz}
\usepackage{pgfplots}
\pgfplotsset{compat=1.18}
\usepackage[font=footnotesize]{caption} 
\usepackage[font=footnotesize]{subcaption}
\usepackage{glossaries}
\glsdisablehyper
\newacronym{rf}{RF}{radio frequency}
\newacronym{cpu}{CPU}{central processing unit}
\newacronym{mimo}{MIMO}{multiple-input multiple-output}
\newacronym{ap}{AP}{access point}
\newacronym{adc}{ADC}{analog-to-digital converter}
\newacronym{vga}{VGA}{variable-gain amplifier}
\newacronym{dsp}{DSP}{digital signal processing}
\newacronym{bb}{BB}{base-band}
\newacronym{bpf}{BPF}{band-pass filter}
\newacronym{osr}{OSR}{oversampling rate}
\newacronym{ue}{UE}{user equipment}
\newacronym{ofdm}{OFDM}{orthogonal frequency-division multiplexing}
\newacronym{lmmse}{LMMSE}{linear minimum mean-square error}
\newacronym{blmmse}{BLMMSE}{Bussgang linear minimum mean-square error}
\newacronym{evm}{EVM}{error vector magnitude}
\newacronym{mr}{MR}{maximum ratio}
\newacronym{zf}{ZF}{zero forcing}
\newacronym{mmse}{MMSE}{minimum mean squared error}
\newacronym{nmse}{NMSE}{normalized mean squared error}
\newacronym{mse}{MSE}{mean squared error}
\newacronym{rof}{RoF}{radio-over-fiber}
\newacronym{mmwave}{mm-wave}{millimeter-wave}
\newacronym{rrh}{RRH}{remote radio head}
\newacronym{csi}{CSI}{channel state information}
\newacronym{crb}{CRB}{Cram\'{e}r-Rao Bound}
\newacronym{snr}{SNR}{signal-to-noise ratio}
\newacronym{sdr}{SDR}{signal-to-dither ratio}
\newacronym{sinr}{SINR}{signal-to-interference-plus-noise ratio}
\newacronym{snn}{SNN}{scaled-nearest neighbor}
\newacronym{rcu}{RCU}{random-coding union}
\newacronym{uatf}{UatF}{use-and-then-forget}
\newacronym{se}{SE}{spectral efficiency}
\newacronym{ml}{ML}{maximum likelihood}
\newacronym{gmi}{GMI}{generalized mutual information}
\newacronym{dft}{DFT}{discrete Fourier transform}
\newacronym{lna}{LNA}{low-noise amplifier}
\newacronym{dmimo}{D-MIMO}{distributed \glsentryshort{mimo}}
\newacronym{dac}{DAC}{digital-to-analog converter}
\newacronym{ga}{GA}{genetic algorithm
}
\newacronym{cdf}{CDF}{cumulative distribution function
}

\newcommand{\xx}[4]{#1_{#2,#3}^{(#4)}}

\newcommand{\nb}{n_\text{b}}
\newcommand{\nd}{n_\text{d}}
\newcommand{\np}{n_\text{p}}

\makeatletter
\def\@IEEEinterspaceratioM{0.265}
\def\@IEEEinterspaceMINratioM{0.1651}
\def\@IEEEinterspaceMAXratioM{0.38}

\def\@IEEEinterspaceratioB{0.31}
\def\@IEEEinterspaceMINratioB{0.19}
\def\@IEEEinterspaceMAXratioB{0.38}
\@IEEEtunefonts
\makeatother
\begin{document}

\title{Distributed Massive MIMO with 1-Bit Radio-over-Fiber Fronthaul: Uplink Spectral
	Efficiency and Power Control}

\author{Alireza Bordbar,~\IEEEmembership{Graduate Student Member,~IEEE}, Anzhong Hu,~\IEEEmembership{Member,~IEEE}, and Giuseppe Durisi,~\IEEEmembership{Senior Member,~IEEE}%
\thanks{Alireza Bordbar and Giuseppe Durisi are with the Department of Electrical Engineering, Chalmers University of Technology, 41296 Gothenburg, Sweden (e-mail: bordbar@chalmers.se; durisi@chalmers.se).}%
\thanks{Anzhong Hu is with Hangzhou Dianzi University, Hangzhou, China (e-mail: huaz@hdu.edu.cn).}%
\thanks{This work was supported in part by the Swedish Foundation for Strategic Research (SSF), under grant FUS21-0004. The simulations were enabled by resources provided by the National Academic Infrastructure for Supercomputing in Sweden (NAISS), partially funded by the Swedish Research Council through grant agreement no. 2022-06725. A conference version of this paper will be presented at the 2026 IEEE International Workshop on Signal Processing and Artificial Intelligence in Wireless Communications (SPAWC).}}

\maketitle
\begin{abstract}
We analyze the uplink spectral
    efficiency achievable in a distributed multiple-input multiple-output (D-MIMO) architecture employing a 1-bit radio-over-fiber fronthaul.
    This architecture eliminates the need for local oscillators at the access points, hence enabling coherent-phase transmission without costly over-the-air synchronization.
    With this fronthaul architecture, the uplink signal at the central processing unit is a dithered, oversampled, and 1-bit quantized version of the pass-band signal received at the access points.
    This makes some of the conventional spectral-efficiency expressions used in the D-MIMO literature not directly applicable for two key reasons: the nonlinearity of the input-output relation, and the practical unavailability of minimum mean square error (MMSE) channel estimates.
    To address this issue, we propose novel achievable rate expressions that do not require MMSE channel estimates and rely on Bussgang decomposition for a linearization of the input-output relation.
    We use these expressions to determine the optimal signal-to-dither ratio (SDR) that maximizes the achievable rates in both single- and multiuser scenarios and to assess the impact of oversampling. We then use one of the proposed achievable-rate expressions to investigate the max-min fairness problem when the access points cannot maintain the optimal SDR because of limitations in the dynamic range.
\end{abstract}
\begin{IEEEkeywords}
achievable rate, distributed MIMO, max-min fairness, one-bit quantization
\end{IEEEkeywords}

\section{Introduction}
In distributed \glsentrylong{mimo} (\glsentryshort{dmimo})%
\glsunset{mimo}\glsunset{dmimo}, a large number of geographically separated \glspl{ap}, connected to a \gls{cpu} via a fronthaul link, serve several \glspl{ue} over the same time-frequency resources~\cite{Bjornson_hardware}. The main benefit of \gls{dmimo} is that it provides a more uniform quality of service compared with co-located massive \gls{mimo} architectures. A major practical challenge in \gls{dmimo} architectures is to phase-synchronize the local oscillators at the \glspl{ap}, which is necessary for reciprocity-based downlink beamforming. This can be done via over-the-air synchronization techniques, at the cost of signaling overhead~\cite{hamed_real-time_2016, Larsson, Shandi}.
A different approach is considered in~\cite{aabel24-10a, aabel25-11a, Vandierendonck, Kaneko_Kuwabara_Tawa_Maruta_2025}, where \glspl{ap} and the \gls{cpu} are connected via an optical-fiber fronthaul link over which \gls{rf} signals are exchanged, and downconversion is performed at the \gls{cpu}. This eliminates the need for local oscillators at the \glspl{ap} and the corresponding synchronization problem. However, transmitting \gls{rf} signals over optical fiber is challenging because of propagation distortion. Therefore, in~\cite{aabel24-10a, aabel25-11a} a two-level dithered-and-quantized version of the \gls{rf} signals is exchanged over the fronthaul. This two-level waveform is then oversampled at the \gls{cpu} via high-speed 1-bit \glspl{adc}.
Following~\cite{aabel24-10a, aabel25-11a}, we refer to this architecture as \emph{\gls{dmimo} with 1-bit \gls{rof} fronthaul}.

Maintaining an optimal \gls{sdr} is critical for maximizing the performance of this architecture~\cite{hu24-11a}. In the testbed described in~\cite{aabel24-10a,aabel25-11a}, this is accomplished by using a \gls{vga} to normalize the received power to a desired level before adding a dither signal of fixed power. Unfortunately, the \gls{vga}'s limited dynamic range prevents it from maintaining the optimal \gls{sdr} under certain operating conditions.

In this paper, we characterize the uplink spectral efficiency achievable with the \gls{dmimo} with 1-bit \gls{rof} fronthaul architecture by deriving bounds on achievable rates. We then use these bounds to study the effect of oversampling at the \gls{cpu} on the spectral efficiency and to determine the \gls{sdr} that maximizes the achievable rates. 
Furthermore, we investigate how to mitigate via \gls{ue} power control the negative effects of the \gls{vga}'s limited dynamic range on the achievable rates.

\subsection{State of the Art}
\paragraph*{\Gls{evm} and \gls{mmse} analyses}
The uplink performance of~\gls{dmimo} systems with 1-bit \gls{rof} fronthaul and the effect of the \gls{vga}'s dynamic range on the \gls{evm} have been studied via measurements in~\cite{aabel24-10a, aabel25-11a}. Additionally, in~\cite{hu24-11a}, an \gls{evm} analysis is provided, which shows that \gls{dmimo} with 1-bit \gls{rof} fronthaul outperforms co-located \gls{mimo}, despite the 1-bit-induced nonlinearity. In~\cite{bordbar24-03a}, a data-driven channel estimator that outperforms the \gls{blmmse} channel estimator~\cite{wan_generalized_2019} is proposed. Furthermore, 
 the proposed data-driven estimator is shown to be more robust to hardware limitations and impairments than the \gls{blmmse} channel estimator.

 \paragraph*{Achievable rates in quantized \gls{mimo} systems}
 The two main challenges in analyzing the achievable rate in 1-bit \gls{mimo} systems are the nonlinearity of the input-output relation and the unavailability of \gls{mmse} channel estimates, since the \gls{mmse} estimator is difficult to implement. The first problem can be overcome by using Bussgang decomposition~\cite{bussgang52a}. However, the unavailability of the \gls{mmse} channel estimates remains a major challenge as it precludes the use of many conventional achievable-rate bounds in the massive \gls{mimo} literature, such as~\cite[Eq.~(4.2)]{Bjornson_hardware} and~\cite[Eq.~(5.25)]{demir_foundations_2021}.

Several works have used Bussgang decomposition to derive achievable-rate expressions or approximations for co-located \gls{mimo} architectures with low-resolution quantization of baseband signal samples. In~\cite{Jacobsson}, approximations for achievable rates for finite-cardinality and Gaussian inputs are proposed for 1-bit \glspl{adc}. In~\cite{Jacobsson_precoding}, lower bounds on the achievable downlink rates for 1- and few-bit quantizers are derived for an auxiliary model in which the quantization noise is replaced with independent Gaussian noise. 
In~\cite{Bjornson-hardware} and~\cite{Mollen}, it is shown that the spectral efficiency with 1-bit \glspl{adc} approaches that of unquantized systems as the number of antennas grows. In~\cite{Liang_2016}, the \gls{gmi} is used to analyze the achievable data rates of a mixed-\gls{adc} architecture. Finally, in~\cite{Zhang}, the uplink achievable rate for cell-free massive \gls{mimo} with low-resolution \glspl{adc} is studied by modeling the quantization distortion as Gaussian. It is important to remark that~\cite{Jacobsson, Jacobsson_precoding, Bjornson-hardware, Mollen, Liang_2016, Zhang} consider the quantization of baseband signals, whereas in the architecture analyzed in the present paper, the 1-bit quantizer is applied to a dithered version of the \gls{rf} signals.

In agreement with the naming conventions in~\cite{kramer_information_2023}, we shall call the bounds in~\cite[Eqs.~(31), (38)]{Jacobsson},~\cite[Eq.~(25)]{Jacobsson_precoding},~\cite[Eq.~(17)]{Bjornson-hardware},~\cite[Eq.~(77)]{Mollen},~\cite[Eq.~(4.2)]{Bjornson_hardware}, and~\cite[Eq.~(5.25)]{demir_foundations_2021} \emph{reverse-model bounds}, since they are obtained by bounding the mutual information via a decomposition involving the conditional distribution of the input signal given both output and channel estimate. In contrast, the \gls{gmi} bound proposed in~\cite[Eqs.~(29), (30)]{Liang_2016} is an example of a \emph{forward-model bound}, as it relies on an auxiliary conditional distribution (or, more generally, a conditional metric) of the output signal given the input and channel estimate.
\paragraph*{\gls{ue} power control}
A common framework to optimize \gls{ue} power control in \gls{dmimo} is
  max-min fairness, in which the goal is to maximize the lowest
  achievable rate among \glspl{ue} under a per-\gls{ue} power constraint.
  For conventional \gls{dmimo} systems with infinite-precision
  \glspl{adc}, this problem is well studied. Existing solutions include
  fixed-point algorithms based on monotonic functions of the
  signal-to-interference ratio~\cite{yates, hong2014unified}, or
  iterative bisection search over the target rate~\cite{boyd2004convex,
  demir_foundations_2021}. To reduce the computational burden,
  meta-heuristics~\cite{conceicao2022maxmin} and learning-based
  strategies~\cite{chafaa2025transformer} have also been proposed.
  However, in \gls{dmimo} with 1-bit \gls{rof} fronthaul architectures, the
  dependence of the achievable rate on the transmitted power is more
  complex, as such quantity affects also the \gls{vga}'s operating point,
  the Bussgang gain, and the quantization-noise covariance matrix. This
  complex dependence breaks the monotonicity properties often assumed in
  the literature, rendering conventional power-control schemes
  inapplicable.

\subsection{Contributions}
The main contributions of this paper are summarized as follows:

\begin{enumerate}

\item We derive a lower bound on mutual information, which we refer to as the \emph{reverse-channel
bound}, using
Bussgang decomposition and the reverse-model bounding
technique. This bound generalizes the ones proposed
in~\cite[Eq.~(4.2)]{Bjornson_hardware}
and~\cite[Eq.~(5.25)]{demir_foundations_2021},
since it holds for arbitrary channel estimation algorithms.
For the special case in which the receiver has access to the
first moment of the effective channel, the proposed bound
reduces to a generalization of the \gls{uatf}
bound~\cite[Eq.~(5.8)]{demir_foundations_2021}
to \gls{dmimo} systems with 1-bit \gls{rof} fronthaul.

\item To overcome the computational challenges associated
with the reverse-channel bound, we derive a significantly
simpler achievable-rate expression based on Bussgang
decomposition and the \gls{gmi}. The resulting bound
relies on the \gls{snn} decoding rule and holds for
arbitrary channel estimation algorithms.

\item Via numerical simulations conducted for realistic
system parameters, we evaluate the impact of the
oversampling rate and of the \gls{sdr} on
the achievable rates. Our results illustrate that the
reverse-channel bound yields the highest achievable-rate
predictions, but becomes computationally difficult to
evaluate at oversampling the rates used in current
testbeds~\cite{aabel24-10a,aabel25-11a}. In contrast,
the \gls{gmi} and \gls{uatf} bounds remain computationally
tractable.
\item We investigate the problem of max-min fairness in the presence of \glspl{ap} equipped with \glspl{vga} of
finite dynamic range. Specifically, we use the \gls{uatf} bound in combination with a standard
  nonlinear optimizer to determine the optimal \gls{ue} power levels for the
  max-min fairness problem. Interestingly, our results show
  that setting the power levels so that the \gls{vga} at each \gls{ap} operates
  within its dynamic range is not necessarily optimal. Indeed, doing so
  may reduce the benefits obtainable via macro diversity. 
\end{enumerate}

Compared with the conference version of this paper~\cite{bordbar26-spawc}, we provide the proofs of the proposed bounds, which were omitted in~\cite{bordbar26-spawc} because of space limitations. Furthermore, we present more comprehensive simulation results. Finally, we formulate the max-min fairness problem in the presence of \glspl{ap} with \glspl{vga} of finite dynamic range, solve this problem via a standard nonlinear optimizer to determine the optimal \gls{ue} transmit powers, and present insights for practical deployment scenarios.

The rest of this paper is structured as follows. In Section~\ref{sec:system-model}, we describe the \gls{dmimo} with  1-bit \gls{rof} system model. In Section~\ref{sec:section3}, we derive the reverse-channel, \gls{uatf}, and \gls{gmi} bounds on the achievable rates for a simpler linear channel model. Then, we apply in  Section~\ref{sec:sec4} these bounds to \gls{dmimo} systems with 1-bit \gls{rof}, after Bussgang linearization. In Section~\ref{sec:simulation-results}, we provide simulation results to assess the impact of oversampling and determine the optimal \gls{sdr}. We also formulate the max-min fairness problem for the case of \glspl{vga} with finite dynamic range, and evaluate the effect of the \glspl{vga}' finite dynamic range on the achievable rates. We conclude the paper in~Section~\ref{sec: conclusion}.

\paragraph*{Notation}
Scalars are denoted by italic letters, vectors by bold lowercase letters, matrices by bold uppercase letters, and sets by calligraphic letters. The $N \times 1$ zero vector is denoted by $\veczero_N$. We denote by $\jpg(\veczero_N , \matC)$ the distribution of an $N$-dimensional circularly symmetric complex-valued Gaussian vector with covariance matrix $\matC$. Similarly, we denote by $\setN(\veczero_N , \matC)$ the distribution of an $N$-dimensional real-valued Gaussian vector with covariance matrix $\matC$.
The sets of real and complex numbers are denoted by $\reals$ and $\complexset$, respectively.
The transpose, Hermitian transpose, and complex conjugate are denoted by $\tp{(\cdot)}$, $\herm{(\cdot)}$, and $\conj{(\cdot)}$, respectively.
We use $\Re\{\cdot\}$ for the real part, $\Ex{}{\cdot}$ for expectation, $\distas$ for equality in distribution, $\diag(\cdot)$ for a diagonal matrix, and $\kron$ for the Kronecker product. We denote by $\delta[n]$ the discrete-time Kronecker delta, i.e., $\delta[0]=1$ and $\delta[n]=0$ for $n\neq 0$.
The operation $\vectorize{\matA}$ stacks the columns of a matrix $\matA$ into a vector.
The function $\mathrm{sgn}(\cdot)$ denotes the sign function and is applied element-wise when its input is a vector.
The identity matrix of size $B$ is denoted by $\matI_B$. We use the standard Landau notation $\mathcal{O}(\cdot)$ to describe asymptotic
  scaling; specifically, $f(x)=\mathcal{O}(g(x))$ means that there exist constants
  $C>0$ and $x_0$ such that $|f(x)|\le C|g(x)|$ for all $x\ge x_0$.

\section{System Model}\label{sec:system-model}
\begin{figure}[t]
	\centerline{{\includegraphics[width=\figwidth]{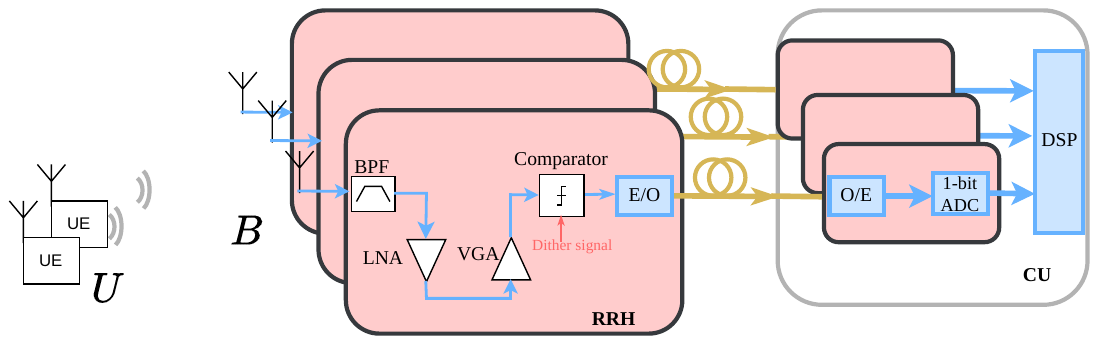}}}
	\caption{A \gls{dmimo} system with 1-bit \gls{rof} fronthaul, in which $U$ single-antenna \glspl{ue} communicate with $B$ \glspl{ap} connected to a \gls{cpu}. The received analog \gls{rf} signals at the $B$ \glspl{ap} are passed through a band-pass filter (BPF), an \gls{lna} and a \gls{vga}. The resulting signal is then compared with a dither signal to generate a two-level analog \gls{rf} waveform. The waveform is converted to the optical domain via an electrical-to-optical (E/O) converter, transmitted to the \gls{cpu} via an optical fiber, and converted back to the electrical domain. The \gls{cpu} samples the two-level  waveform with 1-bit \glspl{adc} to generate 1-bit digital \gls{rf} signals, and then performs signal processing at the digital signal processor (DSP).}
	\label{fig:system}
\end{figure}

We consider the \gls{dmimo} architecture with 1-bit \gls{rof} fronthaul demonstrated
in~\cite{sezgin21-02a,aabel24-10a,aabel25-11a} and depicted in Fig.~\ref{fig:system}.
In the uplink, the received \gls{rf} signal at the antenna port of each single-antenna \gls{ap} is band-pass filtered, amplified, and
then passed through a \gls{vga}.
The output of the \gls{vga} is then injected into a comparator, whose other port receives a suitably
designed dither signal, produced at the \gls{cpu} and sent to each \gls{ap} via the downlink
fronthaul link.
The resulting two-level signal is converted into the optical domain and transmitted over an optical
fiber to the \gls{cpu}. Note that, with this waveform choice, the impact of distortions occurring over the optical fiber is minimized, which makes the design robust.
At the \gls{cpu}, the received signal is converted back to the electrical domain and then digitized
via a 1-bit \gls{adc} operating at a frequency much larger than the signal bandwidth.
As an example, in the testbed described in~\cite{aabel25-11a} a signal of roughly $\qty{80}{MHz}$
centered at $\qty{2.4}{GHz}$ is sampled at $\qty{25}{GS/s}$ at the \gls{ap}.
This yields an extremely large oversampling rate.
Note that the 1-bit \gls{adc} operates directly on the \gls{rf} signal: no analog down-conversion
is performed at the \gls{ap}, which eliminates the need for a local oscillator and the resulting
phase noise.
We assume that down-conversion to \gls{bb} is performed digitally at the \gls{cpu} after sampling
and 1-bit quantization.

As shown in~\cite{hu24-11a}, dithering and oversampling are critical to overcome the distortion
introduced by the 1-bit quantizer.
The role of the \gls{vga} is to ensure that the power of the received signal is scaled so as to make
dithering effective.

Throughout, we focus on the transmission of a \gls{rf} signal of bandwidth $W$, centered at a carrier
frequency $f\sub{c}$.
We let $f\sub{s}$ denote the sampling frequency of the \gls{adc}, and assume that $f\sub{s}\geq
	2f\sub{c} + W$.

\subsection{Uplink Data Transmission}
\label{sec: 2A}
Following~\cite{hu24-11a}, we fix a time duration $T=N/f\sub{s}$, where the integer $N$ denotes the number of samples per
\gls{ap} collected at the \gls{cpu} over the time duration $T$ after 1-bit quantization.
At each discrete time $n$, we gather the $B$ samples from the $B$ \glspl{ap} in a vector
$\vecz_n\supp{rf} \in \{-1,1\}^{B}$, and express this vector as
\begin{equation}
	\label{znURF}
	\vecz_n\supp{rf}=\mathrm{sgn}(\vecy_n\supp{rf}+\vecd_n).
\end{equation}
Here, $\vecy_n\supp{rf}$ denotes the $n$th sample of the received \gls{rf} signal after
band-pass filtering and amplification, and $\vecd_{n}$ denotes the dither signal.
For analytical tractability, we assume throughout the paper that $\vecd_{n}\distas
	\setN(\veczero_{B}, \matD/2)$, where $\matD = \diag (E\sub{d,1}, \dots, E\sub{d,B})$, and that $\vecd_{n}$ is independent across $n$ and
independent of the received signal $\vecy_n\supp{rf}$.
 Note that, to keep the notation compact, we have chosen
  to absorb the amplitude normalization performed by the \glspl{vga} on
  $\vecy_n\supp{rf}$ in the variance of the dither signal, which we
  allow to be \gls{ap}-dependent.

The received signal is modeled as
\begin{equation}
	\label{xnBB}
	\vecy_n\supp{rf}=\sqrt{2}\Re\lefto\{\vecy_n\supp{bb}e^{i2\pi(f\sub{c}/f\sub{s})n}\right\},
\end{equation}
where $\vecy_n\supp{bb}$ stands for the complex envelope.
To model this quantity, and, specifically, account for the correlation introduced by oversampling, we
let $S=WT$ and assume for simplicity that $S$ is an odd integer.
We then define the set
\begin{equation}
	\setS=\{0,1,\dots,(S-1)/2\}\union \{N-(S-1)/2,\dots,N-1\},
\end{equation}
and model $\vecy_n\supp{bb}$ via an inverse discrete Fourier transform as
\begin{equation}
	\label{xnBBaa}
	\vecy_n\supp{bb}=\frac{1}{\sqrt{N}}\sum_{k\in\setS}\left(\bar{\matH}_k\bar{\vecs}_k+\bar{\vecw}_k\right)
	e^{i2\pi\frac{k}{N}n}.
\end{equation}
Here, $\bar{\vecs}_k \in \complexset^{U}$ denotes the frequency-domain transmitted symbols from the
$U$ single-antenna \glspl{ue} with $\bar{\vecs}_k\distas\jpg(\veczero_{U}, \matP_k)$, where
\begin{equation}
	\matP_k = \diag(\rho_{1,k},\dots,\rho_{U,k})
\end{equation}
collects the transmit power used by the \glspl{ue} on subcarrier~$k$.
Furthermore, $\bar{\vecw}_k\distas\jpg(\veczero_{B}, N_{0}\matI_{B})$ is the band-limited additive Gaussian noise,
and $\bar{\matH}_k\in \complexset^{B\times U}$ denotes the channel frequency response.
Throughout the paper, we assume that
$\bar{\vech}_k=\vectorize{\bar{\matH}_k}\distas\jpg(\veczero_{U B},\matL)$, where
$\matL\in\reals^{UB\times UB}$ is a diagonal matrix containing the large-scale fading coefficients
between the \glspl{ap} and the \glspl{ue}.
Specifically, if $\beta_{b,u}$ denotes the large-scale fading coefficient between \gls{ap}~$b$
and \gls{ue}~$u$, then, 
\begin{equation}
  	\matL=\diag(\matL_1,\ldots,\matL_U),
  \end{equation}
with 
\begin{align}
	\label{eq: pathloss_label}
	\matL_u=\diag(\beta_{1,u},\ldots,\beta_{B,u}).
\end{align}
Finally, we assume that to recover the transmit signal, the \gls{cpu} performs digital down-conversion, followed by a Fourier transform and
filtering.
This yields
\begin{equation}
	\label{zkUBB}
	\bar{\vecz}_k\supp{bb}={\sqrt{\frac{2}{N}}}\sum_{n=0}^{N-1}
	\vecz_n\supp{rf}e^{-i2\pi(k/N+f\sub{c}/f\sub{s})n}\in\complexset^{B}
\end{equation}
for $k\in \setS$.

\subsection{Linearization via Bussgang Decomposition}
\label{sec: twoB}
The input--output relation in~\eqref{znURF} is nonlinear. Since the input vector $\vecq_n=\vecy_n\supp{rf}+\vecd_n \in \reals^B$ is conditionally Gaussian given the channel matrices $\{\bar{\matH}_k\}$, we can apply Bussgang decomposition \cite{bussgang52a} to linearize~\eqref{znURF} as
\begin{align}
	\label{eq: znrfBussgang}
	\vecz_n\supp{rf}=\matG\vecq_n+\vece_n\supp{rf},
\end{align}
where $\vece_n\supp{rf}$ is the quantization noise that is uncorrelated with the input $\vecq_n$, and $\matG$ is the Bussgang gain matrix given by~\cite{hu24-11a}
\begin{align}
	\matG = \sqrt{\frac{2}{\pi}} \diag\lefto(\matR_\vecq [0] \right)^{-1/2}.
	\label{eq: Bussgang gain matrix definition}
\end{align}
Here, $\matR_\vecq[n]$ is the autocovariance matrix of $\vecq_n$, given by
\begin{equation}
\begin{split}
	\matR_\vecq[n]
	&= \matR_{\vecy\supp{rf}} [n]
	+ \frac{\matD}{2} \delta [n],
\end{split}
\end{equation}
with 
\begin{equation}
	\label{eq: ryrf}
\begin{split}
	\matR_{\vecy\supp{rf}} [n]
	&= \Ex{}{\vecy\supp{rf}_m\herm{(\vecy\supp{rf}_{n-m})}} \\
	&= \frac{1}{N}\,\Re\Bigg\{\sum_{k \in \setS}
	\matQ_k e^{i 2 \pi (f\sub{c} / f\sub{s} + k/N)n} \Bigg\},
\end{split}
\end{equation}
where $\matQ_k=\bar{\matH}_k \matP_k \herm{\bar{\matH}_k} + N_0 \matI_B$.

Substituting~\eqref{xnBB} and~\eqref{xnBBaa} into \eqref{eq: znrfBussgang}, we obtain
\begin{align}
	\label{eq: zkbbBussgangdecomposition}
	\bar{\vecz}_k\supp{bb}=\matG\bar{\matH}_k\bar{\vecs}_k
	+\matG\bar{\vecw}_k+\matG\bar\vecd_k+\bar\vece_k,
\end{align}
where $\bar\vecd_k\in\complexset^{B}$ and $\bar\vece_k\in\complexset^{B}$ are defined as
\begin{IEEEeqnarray}{rCl}
	\label{eq: dk}
	\bar\vecd_k&=&\sqrt{\frac{2}{N}}\sum_{n=0}^{N-1}\vecd_ne^{-i2\pi(k/N+f\sub{c}/f\sub{s})n},\\
	\label{eq: ek}
	\bar\vece_k&=&\sqrt{\frac{2}{N}}\sum_{n=0}^{N-1}\vece_n\supp{rf}e^{-i2\pi(k/N+f\sub{c}/f\sub{s})n},
\end{IEEEeqnarray}
with covariance matrices
\begin{align}
	\Ex{}{\bar{\vecd}_k \, \herm{\bar{\vecd}_k}} & = \matD,                                                                                                 \\
	\Ex{}{\bar{\vece}_k \, \herm{\bar{\vece}_k}} & = \matC_{\bar{\vece}_k} = 2 \sum_{n = 0}^{N-1} \matR_{\vece}[n] \, e^{-i2 \pi (k/N + f\sub{c}/f\sub{s}) n}.
\end{align}
Here, $\matR_{\vece}[n] \in \reals^{B \times B}$ is given by
\begin{align}
	\matR_\vece [n] = \matR_{\vecz\supp{rf}} [n] - \matG \matR_\vecq[n] \matG.
\end{align}
For a 1-bit quantizer, $\matR_{\vecz\supp{rf}} [n]$ can be computed using Van Vleck's arcsine law~\cite{van-vleck66a}, which yields
\begin{equation}
	\matR_{\vecz\supp{rf}} [n]
	= \frac{2}{\pi} \arcsin\!\Bigg(
	\matJ^{-1/2 } \, \matR_\vecq[n]\, \matJ^{-1/2}\Bigg),
\end{equation}
with $\matJ = \diag(\matR_\vecq [0])$.
\subsection{Acquiring Channel State Information}
  We consider a system in which channel state information at the \glspl{ap} is
  acquired via uplink pilot transmission. In what follows, we adapt 
 the expressions obtained in Section~\ref{sec: 2A} and Section~\ref{sec: twoB}  to  pilot transmission. Specifically, we will provide a model for
  the input-output relation in the pilot-transmission phase. We will
  then linearize it via Bussgang decomposition, and finally, derive a
  \gls{blmmse} channel estimator that generalizes to \gls{dmimo} systems with 1-bit
  \gls{rof} the one derived in~\cite{Li} for the 1-bit baseband
  \gls{mimo} case. We assume that during the  pilot phase, all $U$
\glspl{ue}  transmit simultaneously  pilot sequences over $\np$  symbols to the
\glspl{ap}. Let $\bar{\bm{\Phi}}_k\in\complexset^{U\times n\sub{p}}$
  contain the $n\sub{p}$ pilot symbols transmitted by the $U$ \glspl{ue}
  on subcarrier $k \in \setS$. We assume that the pilot
  sequences are orthogonal, which implies that
  $\bar{\bm{\Phi}}_k\herm{\bar{\bm{\Phi}}_k}=\matI_U$. For convenience, we
  group the received samples 
  corresponding to the $n$th time sample of the $n\sub{p}$ symbols over all $B$ \glspl{ap} in the matrix

\begin{equation}
	\label{YnRFP}
	\matY_n\supp{rf}=\sqrt{2}\Re\lefto\{\matY_n\supp{bb}e^{i2\pi(f\sub{c}/f\sub{s})n}\right\}\in\reals^{B\times n\sub{p}},
\end{equation}
where 
\begin{equation}
\begin{split}
	\matY_n\supp{bb}
	&= \frac{1}{\sqrt{N}}
	\sum_{k\in\setS}\bar{\matH}_k\matP_k^{1/2}\bar{\bm{\Phi}}_k
	e^{i2\pi\frac{k}{N}n} \\
	&\quad{}+ \frac{1}{\sqrt{N}}\sum_{k\in\setS}\bar{\matW}_k
	e^{i2\pi\frac{k}{N}n}.
\end{split}
\label{YnBB}
\end{equation}
Here, $\bar{\matW}_k\in\complexset^{B\times n\sub{p}}$ is the band-limited Gaussian noise. 
It turns out convenient to vectorize the received signal $\matY_n\supp{rf}$ as
\begin{equation}
\begin{split}
	\label{YnuRF2}
	\vecy_n\supp{rf,p}
	&= \vectorize{\matY_n\supp{rf}} \\
	&= \sqrt{\frac{2}{N}}\Re\lefto\{\sum_{k\in\setS}
	\left(\tilde{\bm{\Phi}}_k\bar{\vech}_k
	+\bar{\vecw}_k\right)\right.\\
	&\hspace{5.4em}\left.
	{}\times e^{i2\pi(k/N+f\sub{c}/f\sub{s})n}\right\}
	\in\reals^{n\sub{p} B},
\end{split}
\end{equation}
where $\tilde{\bm{\Phi}}_k=\tp{\left(\matP_k^{1/2}\bar{\bm{\Phi}}_k\right)}\kron \matI_B\in\complexset^{n\sub{p} B\times UB}$, and $\bar{\vecw}_k=\vectorize{\bar{\matW}_k}\in\complexset^{n\sub{p} B}$.

Similar to~\eqref{znURF}, we express the quantized signal at the output of the \gls{adc} as
\begin{equation}
	\label{znURFP}
	\vecz_n\supp{rf,p}=\mathrm{sgn}(\vecq_n\supp{p})\in\reals^{n\sub{p} B},
\end{equation}
where 
\begin{equation}
	\label{qndef}
	\vecq_n\supp{p}= \vecy_n\supp{rf,p}+\vecd_n\supp{p}\in\reals^{n\sub{p} B},
\end{equation}
and 
$\vecd_n\supp{p}\distas\setN(\veczero_{\np B},\matD\sub{p})$, where
$\matD\sub{p}=\matI_{\np}\kron\matD$, is the dither signal. Then, the \gls{cpu} processes the signal and obtains
\begin{equation}
	\label{ZkBBP1}
	\bar{\vecz}_k\supp{bb,p}={\sqrt{\frac{2}{N}}}\sum_{n=0}^{N-1}
	\vecz_n\supp{rf,p}e^{-i2\pi(k/N+f\sub{c}/f\sub{s})n}\in\complexset^{n\sub{p} B}.
\end{equation}
Note that since $\vecq_n\supp{p}$ is Gaussian distributed, we can apply Bussgang's theorem~\cite{bussgang52a} to~\eqref{znURFP}. Specifically, proceeding as in Section~\ref{sec: twoB}, we rewrite~\eqref{znURFP} as
\begin{equation}
	\label{znURFP1}
	\vecz_n\supp{rf,p}=\matG\sub{p}\vecq_n\supp{p}+\vece_n\supp{rf,p},
\end{equation}
where $\vece_n\supp{rf,p}\in\reals^{n\sub{p} B}$ is the distortion term that is uncorrelated with $\vecq_n\supp{p}$, and $\matG\sub{p}\in\reals^{n\sub{p} B\times n\sub{p} B}$ is given by
\begin{IEEEeqnarray}{rCl}
	\label{GP1}
	\matG\sub{p}
	&=& \sqrt{\frac{2}{\pi}}\diag\Bigg(
	\frac{1}{N}\Re\Bigg\{
	\sum_{k\in\setS}\tilde{\bm{\Phi}}_k\matL{\herm{\tilde{\bm{\Phi}}_k}}
	\Bigg\}
	\nonumber\\
	&&\hspace{4em}
	+\frac{S}{N}N_0\matI_{n\sub{p} B}
	+\frac{1}{2}\matD\sub{p}\Bigg)^{-\frac{1}{2}}.
\end{IEEEeqnarray}

Substituting~\eqref{YnuRF2} and~\eqref{qndef} into~\eqref{znURFP1} and then~\eqref{znURFP1} into~\eqref{ZkBBP1}, we can express $\bar{\vecz}_k\supp{bb,p}$ as
\begin{equation}
	\label{ZkBBP2}
	\bar{\vecz}_k\supp{bb,p}=\matG\sub{p}\left(\tilde{\bm{\Phi}}_k\bar{\vech}_k+\bar{\vecw}_k+\bar\vecd_k\supp{p}\right)+\bar\vece_k\supp{p},
\end{equation}
where $\bar\vecd_k\supp{p}\in\complexset^{n\sub{p} B}$ and $\bar\vece_k\supp{p}\in\complexset^{n\sub{p} B}$ are defined similarly to~\eqref{eq: dk} and~\eqref{eq: ek}.

The \gls{blmmse} channel estimator~\cite{Li} for the input-output relation in~\eqref{ZkBBP2} is given by
\begin{equation}
	\label{hBLM1}
	\hat{\vech}_k
	= \matL\herm{\tilde{\bm{\Phi}}_k}
	\matC_{\tilde{\vecz}_k\supp{bb,p}}^{-1}
	\tilde{\vecz}_k\supp{bb,p}
	\in\complexset^{UB},
\end{equation}
where
\begin{equation}
	\tilde{\vecz}_k\supp{bb,p}
	= \matG\sub{p}^{-1}\bar{\vecz}_k\supp{bb,p},
\end{equation}
and 
\begin{equation}
	\label{eq:bussgang-lmmse-pilot-covariance}
	\matC_{\tilde{\vecz}_k\supp{bb,p}}
	=
	\tilde{\bm{\Phi}}_k\matL\herm{\tilde{\bm{\Phi}}_k}
	+ N_0\matI_{n\sub{p} B}
	+ \matD\sub{p}
	+ \matG\sub{p}^{-1}\matC_{\bar{\vece}_k\supp{p}}\matG\sub{p}^{-1}.
\end{equation}
Here,
$\matC_{\bar{\vece}_k\supp{p}}=\Ex{}{\bar{\vece}_k\supp{p}\herm{(\bar{\vece}_k\supp{p})}}$ is the covariance matrix of the 
frequency-domain Bussgang quantization error on  subcarrier~$k$.

We note that computing the \gls{mmse} channel estimate for the input-output relation in~\eqref{ZkBBP2} is not 
possible, since the distribution of $\bar\vece_k\supp{p}$ is not known in closed form. This makes computing 
$\Ex{}{\bar{\vech}_k|\bar{\vecz}_k\supp{bb,p}}$ difficult. Thus, achievable-rate expressions that require \gls{mmse} channel estimates such as~\cite[Thm.~4.1]{Bjornson_hardware}, are typically not applicable to \gls{dmimo} systems with 1-bit \gls{rof} fronthaul.

\subsection{\glsentryshort{vga}'s Dynamic Range}
\label{sec:vga-model}
We next discuss how to capture the finite dynamic range of the \gls{vga} in
  our system model, in which we have chosen to absorb the amplitude
  normalization performed by the \gls{vga} in the variance of the dither
  signal. Let $P_b\supp{rf}$ denote the average power of the signal received at \gls{ap} $b$
  during the data-transmission phase:
\begin{equation}
\begin{split}
	P_b\supp{rf} &= \lefto[N\Ex{\{\bar{\matH}_k\}}{\matR_{\vecy\supp{rf}} [0]}\right]_{b,b} \\
	&= \sum_{k\in\setS}
	\left(\sum_{u=1}^{U} \rho_{u,k}\beta_{b,u}+N_0\right).
\end{split}
\label{eq:average-rf-power-ap}
\end{equation}
To maintain a desired \gls{sdr} target $\gamma^\star$, we should ideally set the dither power at each \gls{ap} to
\begin{equation}
	E\sub{d,b} = \frac{P_b\supp{rf}}{\gamma^\star}, \quad b=1,\ldots,B.
	\label{eq: edruleidealvga}
\end{equation}
Unfortunately, the finite dynamic range of the \gls{vga} implies that we can enforce~\eqref{eq: edruleidealvga} only over a limited range of $P_b\supp{rf}$ values. To see why, let us consider the \gls{vga} hardware model detailed in~\cite{aabel25-11a}. Let\footnote{
We use the convention that whenever we add the superscript \qty{}{dBm} or \qty{}{dB} to an already-defined quantity, we mean that this quantity is measured in \qty{}{dBm} or \qty{}{dB}.
} $P_b\supp{rf,dBm} = 10 \log_{10}\left(P_b\supp{rf}\right)$ and $G\sub{lna}$ be the gain in dB of the \gls{lna}, which is equal to \qty{33}{dB}. The average power (in \qty{}{dBm}) at the output of the \gls{lna} is given by 
\begin{equation}
	\label{eq: pblna}
	P_b\supp{lna} = P_b\supp{rf,dBm}+G\sub{lna}.
\end{equation}
In the testbed developed in~\cite{aabel25-11a}, the \gls{vga} is able to normalize the average power at the output of the \gls{vga} to \qty{-31}{dBm} only when $P_b\supp{lna} \in (\qty{-41}{dBm}, \qty{4}{dBm})$. More specifically, the average power after the \gls{vga} is $P_b\supp{lna} + \alpha(P_b\supp{lna})$, where the \gls{vga} gain $\alpha(P_b\supp{lna})$ is given by
\begin{equation}\label{eq:vga-gain-model}
	\alpha(P_b\supp{lna}) =
	\begin{cases}
		\qty{10}{dB}, & P_b\supp{lna} < \qty{-41}{dBm},\\
		\qty{-35}{dB}, & P_b\supp{lna} > \qty{4}{dBm},\\
		-P_b\supp{lna}-\qty{31}{dB}, & \text{otherwise}.
	\end{cases}
\end{equation} 
To capture the \gls{vga}'s finite dynamic range, we replace $\gamma^\star$ in~\eqref{eq: edruleidealvga} with $\gamma$, defined as 
\begin{equation}
\label{eq:actual-signal-to-dither-ratio}
\gamma\supp{dB} =
\begin{cases}
	\gamma^{\star,\mathrm{dB}} + P_b\supp{lna} + 41, 
& P_b\supp{lna} < \qty{-41}{dBm} \\[0.5ex]
\gamma^{\star,\mathrm{dB}} + P_b\supp{lna} - 4, 
& P_b\supp{lna} > \qty{4}{dBm},\\[0.5ex]
\gamma^{\star,\mathrm{dB}}, 
& \text{otherwise}.
\end{cases}
\end{equation}
In words, the target \gls{sdr} is maintained only when the signal at the input of the \gls{vga} has an average power within the dynamic range of the \gls{vga}. If the input power is outside the dynamic range, the \gls{sdr} in dB becomes a linear function (with slope $1$) of the average received power in \qty{}{dBm}.
\section{Lower Bounds on the Achievable Rates for a Simpler Linear Channel Model}
\label{sec:section3}
Since \gls{mmse} channel estimates are not available for the input-output relation in~\eqref{ZkBBP2}, it is necessary to derive achievable-rate expressions that hold for arbitrary channel estimation algorithms.
In this section, by focusing on a simpler linear channel model, we provide such
achievable-rate bounds. In Section~\ref{sec:sec4},
we then particularize these bounds  to \gls{dmimo} systems with 1-bit \gls{rof} fronthaul using the model introduced in
Section~\ref{sec:system-model}.
\subsection{A Simpler Block-Fading Model}
We consider a block-fading model in which the coherence time of the channel corresponds to the transmission of a block of $n\sub{c}~=~n\sub{p}~+~n\sub{d}$  symbols per subcarrier, where $n\sub{p}$ pilots are used to compute an estimate of the channel, used to construct a linear combiner, and $n\sub{d}$ denotes the number of data symbols.
Note that we do not make any assumption on the selected channel estimation algorithm or the linear
combiner.

To obtain our bounds, which are based on a random-coding argument that involves Gaussian codebooks,
we assume that each codeword spans $n\sub{b}$ independently fading blocks and then let $n\sub{b}\to\infty$.

Using a modified version of the notation used in Section~\ref{sec:system-model}, which accounts for
the block-fading nature of the system model considered in this section via the addition of the two
additional indices $\ell$ and $j$, we express the $j$th received symbol for user $u$ on subcarrier~$k$
of block $\ell$ after spatial processing at the \gls{cpu} as
\begin{align}
	\label{eq: io relation block fading}
	\xx{y}{j,u}{k}{\ell} =  g^{(\ell)}_{u,k}  \, \xx{\bar{s}}{j,u}{k}{\ell}  + & \sum_{v\neq u} g^{(\ell)}_{v,k} \, \xx{\bar{s}}{j,v}{k}{\ell} + \xx{z}{j,u}{k}{\ell}.
\end{align}
Here, $g^{(\ell)}_{u,k}$ is the effective channel for user $u$ (after spatial processing) and $\xx{z}{j,u}{k}{\ell}$ is the additive noise.
We assume that the effective channel $g^{(\ell)}_{u,k}$ is a function of the actual channel $\bar{\matH}^{(\ell)}_k \in \complexset^{B\times U}$ and its estimate, denoted by $\hat{\matH}^{(\ell)}_k$.
For example, each $g^{(\ell)}_{u,k}$ may be obtained by projecting the $u$th column of $\bar{\matH}^{(\ell)}_k$ onto a combiner vector computed on the basis of  $\hat{\matH}^{(\ell)}_k$.
Throughout, we assume that, for every fixed pair $(u,k)$, the additive noise terms $\{z^{(\ell)}_{j,u,k}\}$ have zero mean, are \iid across~$\ell$, identically distributed (but not necessarily independent) across~$j$ even when $\{\bar{\matH}^{(\ell)}_k\}$ is given, and uncorrelated with $\{\bar{s}^{(\ell)}_{j,u,k}\}$. Note that we do not require $\{z^{(\ell)}_{j,u,k}\}$ to be Gaussian or uncorrelated across~$u$ or~$k$.
Furthermore, we denote the variance of each $z^{(\ell)}_{j,u,k}$ as $\sigma^{2}_{u,k}$.
Finally, we assume that $\{g^{(\ell)}_{u,k}\}$ are independent of the transmit symbols $\{\bar{s}^{(\ell)}_{j,u,k}\}$.
It is important to highlight that we have made no assumptions on the quality of the
channel estimates, on the method used to obtain them, or on the joint
distribution of the pairs $\{\bar{\matH}^{(\ell)}_{k},
	\hat{\matH}^{(\ell)}_{k}\}$.\footnote{In particular, for some of the results in this section, we do
	not even require the channel to be Gaussian.}
This is different from most analyses available in the literature, in which bounds on the
achievable rates are derived under the simplifying assumptions that $\hat{g}^{(\ell)}_{u,k}$ is a linear function of
$\bar\matH^{(\ell)}_{k}$ (as in the combiner example we have just provided), that each $\hat{\matH}^{(\ell)}_{k}$ is
a \gls{mmse} estimate of the corresponding $\bar{\matH}^{(\ell)}_{k}$, and that $\{\bar\matH^{(\ell)}_{k}, \hat{\matH}^{(\ell)}_{k}\}$ are jointly Gaussian
(see,~e.g.,~\cite[Thm.~4.1]{Bjornson_hardware}).

%

Since our bounds hold for Gaussian codebooks, in the remainder of this section we assume, consistently with the model in Section~\ref{sec: 2A}, that each
symbol $\bar{s}^{(\ell)}_{j,u,k}$ is drawn independently over users,
subcarriers, data symbols, and fading blocks, from a $\jpg(0,\rho_{u,k})$ distribution.

We start by presenting in Section~\ref{sec:ml-bound} a reverse-channel bound, which is
obtained by assuming that the receiver uses \gls{ml} decoding to recover the signal
transmitted by user $u$.
As we shall see, this bound is in general difficult to evaluate numerically,
apart from the special case of linear combining, \gls{mmse} channel estimation, and jointly Gaussian $\{\bar\matH^{(\ell)}_{k}, \hat{\matH}^{(\ell)}_{k}\}$ (see
Section~\ref{sec:ml-bound-mmse-gaussian}), for which it reduces to that reported
in~\cite[Thm.~4.1]{Bjornson_hardware}.
Another case in which numerical computation is feasible is discussed in Section~\ref{sec:uatf}. In
this case, the receiver is given the additional side information $\left\{\Ex{}{g^{(\ell)}_{v,k}}\right\}$, and $\hat\matH^{(\ell)}_{k}$ is not used when decoding.
The resulting bound reduces to the
so-called \gls{uatf} bound~\cite[Thm.~4.4]{Bjornson_hardware}.
%

In Section~\ref{sec:gmi}, we present a second bound, which is expressed in terms of the
so-called \gls{gmi} for mismatched decoding~\cite{ganti00_11a} and pertains to the scenario in
which the decoder does not perform \gls{ml} detection, but operates according to the
suboptimal but more practical \gls{snn} decoding rule.
The main advantage of this bound is that it can be evaluated efficiently
without any simplifying assumptions.

\subsection{The Reverse-Channel Bound}\label{sec:ml-bound}
Our first result pertains to a bound achievable with a receiver that operates according to the optimal single-user \gls{ml} decoding rule.
Let $\bar{\vecs}_{u} \in \complexset^{n\sub{b}n\sub{d}S}$ be a vector containing all symbols $\{\bar{s}^{(\ell)}_{j,u,k}\}$ transmitted by user $u$.
Let $\vecy_{u}$ be the corresponding vector of received samples $\{y^{(\ell)}_{j,u,k}\}$ according
to~\eqref{eq: io relation block fading}.
Let also $\bar\matH^{(\ell)}=\{\bar \matH^{(\ell)}_{k}, k \in \setS \}$ and $\hat\matH^{(\ell)}=\{\hat \matH^{(\ell)}_{k}, k \in \setS \}$.

With \gls{ml} decoding, the receiver selects the codeword
$\hat{{\vecs}}_{u}$ in the
codebook $\setC$ used by user $u$ that has the highest likelihood.
Mathematically,
\begin{equation}\label{eq:ml-decoder}
	\hat{{\vecs}}_{u} = \argmax_{\tilde{{\vecs}} \in \setC} \mathbb{P}(\vecy_{u} \given \tilde{{\vecs}},
	\{\hat\matH^{(\ell)}\}).
\end{equation}
We are now ready to state our first bound.

\begin{thm}\label{thm:ergodic-ml-decoder}
	For the block-fading model~\eqref{eq: io relation block fading}, the rate achievable by user $u$ with \iid Gaussian codebooks and single-user
	\gls{ml} decoding is lower-bounded by
	\begin{IEEEeqnarray}{rCl}
      R\supp{rc}_{u}
      &=& \frac{\nd}{n\sub{c}S}\sum_{k\in\setS}
      \Ex{\hat{\matH}}{
          \log\lefto(1+
          \frac{\rho_{u,k} \abs{\Ex{}{g_{u,k}\given \hat{\matH}}}^{2}}
          {\Delta_{u,k}(\hat{\matH})}\right)},
      \IEEEeqnarraynumspace
      \label{eq:ml-bound}
  \end{IEEEeqnarray}
	where
	\begin{multline}
			\Delta_{u,k}(\hat{\matH})
			  = \sum_{v=1}^{U}\rho_{v,k}
			\Ex{}{\abs{g_{v,k}}^{2} \given \hat{\matH}}                        \\
			  \quad{} - \rho_{u,k} \abs{\Ex{}{g_{u,k}\given \hat{\matH}}}^{2}
			+ \Ex{}{\abs{z_{u,k}}^{2}\given \hat{\matH}}.
		\label{eq:ml-bound-denominator}
	\end{multline}
\end{thm}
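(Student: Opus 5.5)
The plan is to follow the standard reverse-model argument: lower-bound the mutual information per block, then linearize and apply the Gaussian maximum-entropy bound conditionally on $\hat{\matH}$. Since the codeword spans $\nb$ independent blocks and each block is \iid in $\ell$, a standard coding theorem for block-memoryless channels with receiver side information $\{\hat\matH^{(\ell)}\}$ shows that ML decoding~\eqref{eq:ml-decoder} achieves
\[
\frac{1}{n\sub{c}S}\, I\bigl(\{\bar s_{j,u,k}\}_{j,k};\{y_{j,u,k}\}_{j,k},\hat\matH\bigr)
\]
per channel use, as $\nb\to\infty$. Because $\hat\matH$ is independent of user $u$'s data symbols, this equals $I(\bar{\vecs};\vecy\given\hat\matH)$. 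I would then use the chain rule together with the independence of the data symbols across $j$ and $k$ to bound
\[
I(\bar{\vecs};\vecy\given\hat\matH)\ \ge\ \sum_{k\in\setS}\sum_{j=1}^{\nd} I(\bar s_{j,u,k};y_{j,u,k}\given\hat\matH).
\]
This step gives the factor $\nd$, and by identical distribution across $j$ each term is the same.

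For each fixed $(j,k)$, I write $I = h(\bar s\given\hat\matH) - h(\bar s\given y,\hat\matH)$. The first term is $\log(\pi e\rho_{u,k})$. For the second, I use that conditioning reduces entropy and that entropy is invariant under translation, so that for any estimator $c(y,\hat\matH)$,
\[
h(\bar s\given y,\hat\matH)\le h(\bar s - c\,y\given\hat\matH)\le \Ex{\hat\matH}{\log\bigl(\pi e\,\Ex{}{|\bar s-c(\hat\matH)y|^2\given\hat\matH}\bigr)}.
\]
The last inequality follows from Gaussian maximum entropy applied conditionally on $\hat\matH$. I then choose $c$ as the conditional LMMSE coefficient $c=\rho_{u,k}\Ex{}{g_{u,k}\given\hat\matH}^*/\Ex{}{|y|^2\given\hat\matH}$.

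To compute the required moments conditionally on $\hat\matH$, I use the following facts: the symbols are zero-mean and independent across users and of $(g,\hat\matH)$; the noise $z$ is zero-mean and uncorrelated with the symbols; and $g$ is independent of the symbols. These give $\Ex{}{\bar s^* y\given\hat\matH}=\rho_{u,k}\Ex{}{g_{u,k}\given\hat\matH}$ and $\Ex{}{|y|^2\given\hat\matH}=\sum_v\rho_{v,k}\Ex{}{|g_{v,k}|^2\given\hat\matH}+\Ex{}{|z_{u,k}|^2\given\hat\matH}$, which equals $\Delta_{u,k}(\hat\matH)+\rho_{u,k}|\Ex{}{g_{u,k}\given\hat\matH}|^2$. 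The resulting LMMSE error is $\rho_{u,k}-\rho_{u,k}^2|\Ex{}{g\given\hat\matH}|^2/\Ex{}{|y|^2\given\hat\matH}$. Taking the ratio with $\rho_{u,k}$ then yields $\log(1+\rho_{u,k}|\Ex{}{g\given\hat\matH}|^2/\Delta_{u,k})$, and summing over $k$ with the prefactor $\nd/(n\sub{c}S)$ gives~\eqref{eq:ml-bound}.

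The main obstacle is the first step: justifying that ML decoding conditioned on $\hat\matH$ achieves the mutual information. This is delicate because the noise may be correlated across $j$ within a block and across $k$, and is not Gaussian, so the channel is memoryless only across blocks. I would handle it by treating each block as a super-symbol and applying the coding theorem to the \iid super-channel, with input $\{\bar s_{j,u,k}\}$ and output $(\{y_{j,u,k}\},\hat\matH^{(\ell)})$, where the interference from other users is absorbed into the channel law. Correlation is then harmless, since the chain-rule bound only discards dependence. A secondary subtlety is that the conditional uncorrelatedness of $z$ and $\bar s$ given $\hat\matH$ is needed; I would argue that it follows from the independence of the symbols from the pilot-based $\hat\matH$.
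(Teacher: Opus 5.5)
Your proposal is correct and follows essentially the same route as the paper's proof. Both arguments use the coding theorem for the block-memoryless channel, then the chain rule together with independence of the symbols to reduce to the per-symbol terms $I(\bar{s}_{j,u,k};y_{j,u,k}\given\hat{\matH})$, and then $h(\bar{s}_{u,k})-h(\bar{s}_{u,k}-\alpha y_{u,k}\given\hat{\matH})$ with the conditional Gaussian maximum-entropy bound and the conditional LMMSE choice of $\alpha(\hat{\matH})$.

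The one soft spot, which the paper shares, concerns the vanishing of $\Ex{}{\bar{s}_{u,k}^{*} z_{u,k}\given\hat{\matH}}$ and of the cross terms $\Ex{}{g_{v,k}\bar{s}_{v,k} z_{u,k}^{*}\given\hat{\matH}}$. This does not follow merely from the symbols being independent of $\hat{\matH}$, as you propose to argue. It requires $z$ to be conditionally uncorrelated with the symbols given $(g,\hat{\matH})$. The paper assumes this implicitly, and it does hold for the Bussgang model of Section~\ref{sec:sec4}.
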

Here, $\hat\matH$ is distributed as one of the $\hat\matH^{(\ell)}$.
Similarly, $g_{u,k}$ and $z_{u,k}$ are conditionally distributed as one of the $g_{u,k}^{(\ell)}$ and
$z_{j,u,k}^{(\ell)}$ in~\eqref{eq: io relation block fading}.
\begin{IEEEproof}
	See Appendix~\ref{app:proof-ml-bound}.
\end{IEEEproof}
Note that the bound in~\eqref{eq:ml-bound} is difficult to evaluate, even numerically,
because of the conditional expectations in~\eqref{eq:ml-bound} and~\eqref{eq:ml-bound-denominator}.
We discuss next two special cases in which this issue can be overcome.

\subsubsection{Special Case---Linear Combiners, Gaussian Channels, and \gls{mmse} Channel
	Estimation}\label{sec:ml-bound-mmse-gaussian}
Assume that for each $u=1,\dots,U$ and $k\in \setS$, the effective channel $g_{u,k}$ in~\eqref{eq:ml-bound} is a linear function of the actual channel $\bar\matH_{k}$, distributed as one of the $\bar\matH_{k}^{(\ell)}$.
Specifically, assume that $g_{u,k}= \herm{\vecw}_{u,k}\bar\vech_{u,k}$, where $\bar\vech_{u,k}$ is the $u$th column of $\bar\matH_k$ and the spatial combiner  $\vecw_{u,k}$ is a function of the channel estimates $\hat{\matH}$.
Let $\hat{g}_{u,k}= \herm{\vecw}_{u,k}\hat{\vech}_{u,k}$, where $\hat{\vech}_{u,k}$ is the $u$th column of $\hat\matH_{k}$ and assume that $\Ex{}{\bar\vech_{u,k}\given
		\hat{\matH}}=\hat{\vech}_{u,k}$.
Note that this assumption, which holds when $\hat{\vech}_{v,k}$ is obtained via \gls{mmse}
estimation, implies that $\Ex{}{g_{v,k}\given \hat{\matH}}= \hat{g}_{v,k}$, $\forall v,k$.
Let us finally assume that $\{\bar\matH,\hat\matH\}$ are jointly Gaussian distributed, which holds
if $\bar\matH$ is Gaussian  distributed and \gls{mmse} estimation is used.
In this case,
\begin{IEEEeqnarray}{rCl}
	\Ex{}{\abs{g_{v,k}}^{2}\given \hat{\matH}} &=& \Ex{}{\abs{g_{v,k}- \hat{g}_{v,k} + \hat{g}_{v,k}}^{2}
		\given \hat{\matH}}\\
	&=& \Ex{}{\abs{g_{v,k}-\hat{g}_{v,k}}^{2} \given \hat{\matH}}
	+ \abs{\hat{g}_{v,k}}^{2}\label{eq:variance-simplification-1} \\
	&=& \Ex{}{\abs{g_{v,k}-\hat{g}_{v,k}}^{2}}
	+ \abs{\hat{g}_{v,k}}^{2} \label{eq:variance-simplification-2}.
\end{IEEEeqnarray}
Here,~\eqref{eq:variance-simplification-1} follows because $\Ex{}{g_{v,k}\given \hat{\matH}}=
	\hat{g}_{v,k}$, and~\eqref{eq:variance-simplification-2} follows from the \gls{mmse} estimation assumption.

To summarize, under the assumptions listed in this section, the bound in~\eqref{eq:ml-bound}
simplifies to
\begin{IEEEeqnarray}{rCl}
	R\supp{rc}_{u}
	&=& \frac{\nd}{n\sub{c}S}\sum_{k\in\setS}\Ex{\hat{\matH}}{
		\log\lefto(1+
		\frac{\rho_{u,k} \abs{\hat{g}_{u,k}}^{2}}
		{\Delta_{u,k}\supp{mmse}(\hat{\matH})}\right)},
	\label{eq:ml-bound-mmse-gaussia}
\end{IEEEeqnarray}
where
\begin{multline}
		\Delta_{u,k}\supp{mmse}(\hat{\matH})
		  = \sum_{v\neq u}\rho_{v,k} \abs{\hat{g}_{v,k}}^{2}
		+ \sum_{v=1}^{U}\rho_{v,k}
		\Ex{}{\abs{g_{v,k}- \hat{g}_{v,k}}^{2}}                  \\
		  \quad{} + \Ex{}{\abs{z_{u,k}}^{2}\given \hat{\matH}}.
\end{multline}

This bound, which is much simpler to evaluate numerically than~\eqref{eq:ml-bound},
coincides essentially with the one reported in~\cite[Thm.~4.1]{Bjornson_hardware} (see
also~\cite[Thm.~4.0.2]{ganti00_11a} for an earlier result of the same flavor).

\subsubsection{Special Case---the \gls{uatf} Bound}\label{sec:uatf}
Another case in which the bound in~\eqref{eq:ml-bound} lends itself to efficient numerical
evaluation is when the channel estimates $\hat{\matH}$ are ignored (forgotten) during the decoding
process, but the receiver has perfect knowledge of the mean $\{\Ex{}{g_{v,k}}\}$ of the effective channels.
Such an assumption is reasonable in systems in which the receiver is equipped with a large number
of antennas.
Indeed, in such a scenario, the effective channel after spatial combining tends to
concentrate around its mean value---a phenomenon commonly referred to as channel
hardening~\cite{Bjornson_hardware}.
For this scenario,~\eqref{eq:ml-bound} simplifies to
\begin{IEEEeqnarray}{rCl}
	\label{eq: uatfbound}
	R\supp{uatf}_{u}
	&=& \frac{\nd}{n\sub{c}S}\sum_{k\in\setS}\log\lefto(1 +
	\frac{\rho_{u,k} \abs{\Ex{}{g_{u,k}}}^{2}}
	{\Delta_{u,k}\supp{uatf}}\right),
\end{IEEEeqnarray}
where
\begin{IEEEeqnarray}{rCl}
	\label{eq: deltauatf}
	\Delta_{u,k}\supp{uatf}
	&=& \sum_{v=1}^{U}\rho_{v,k} \Ex{}{\abs{g_{v,k}}^{2}}
	- \rho_{u,k} \abs{\Ex{}{g_{u,k}}}^{2}
	+ \sigma_{u,k}^{2}. \IEEEeqnarraynumspace
\end{IEEEeqnarray}
This bound, which coincides with the one reported in~\cite[Thm.~4.4]{Bjornson_hardware}, is
usually referred to in the massive \gls{mimo} literature as the \gls{uatf} bound. This
bounding technique was introduced in the massive-\gls{mimo} context in~\cite{ngo13-04a}.

\subsection{The \gls{gmi} Bound}\label{sec:gmi}
We provide next a lower bound on the rates achievable with a decoder that operates
according to the \gls{snn} decoding rule.
Specifically, after relying on  the  $\{\hat{\matH}^{(\ell)}\}$ to compute an estimate
$\hat{g}^{(\ell)}_{u,k}$ of the effective channel gain $g^{(\ell)}_{u,k}$ for the intended user $u$, the decoder selects the
codeword $\hat{\vecs}_{u}$ in the
codebook~$\setC$ of user~$u$ that is closest to the received signal, after each element in the
codeword is scaled by the corresponding estimated channel gain.
Mathematically,
\begin{equation}\label{eq:snn-decoder}
	\hat{{\vecs}}_{u}
	= \argmin_{\tilde{{\vecs}} \in \setC}
	\sum_{\ell=1}^{n\sub{b}} \sum_{j=1}^{n\sub{d}} \sum_{k\in \setS} \abs{y_{j,u,k}^{(\ell)}-
		\hat{g}_{u,k}^{(\ell)} \tilde{s}^{(\ell)}_{j,u,k}}^{2},
\end{equation}
where $\tilde{{\vecs}}$ contains all the $\tilde{s}^{(\ell)}_{j,u,k}$.
In this decoding rule, the receiver treats each channel estimates $\hat{g}^{(\ell)}_{u,k}$ as
perfect and the effective noise, which includes $z_{j,u,k}^{(\ell)}$ and the multiuser interference term
$\sum_{v\neq u} g^{(\ell)}_{v,k} \, \xx{\bar{s}}{j,v}{k}{\ell}$, as Gaussian.
Indeed, when $\hat{g}^{(\ell)}_{u,k}=g^{(\ell)}_{u,k}$ and the effective noise is Gaussian, this decoding rule
coincides with the single-user \gls{ml} rule~\eqref{eq:ml-decoder}.
Constraining the decoder to operate according to~\eqref{eq:snn-decoder} is of practical
interest, since the information-theoretic rates computed under~\eqref{eq:snn-decoder} can be
achieved by a receiver in which the channel estimates are treated as perfect and a capacity-achieving AWGN
channel code is used---an approach widely adopted in practical implementations.
In the next theorem, we provide a characterization of such rates.
\begin{thm}\label{thm:snn-decoder}
	For the block-fading model~\eqref{eq: io relation block fading}, the rate achievable by user $u$ with \iid Gaussian codebooks and
	\gls{snn} decoding~\eqref{eq:snn-decoder} is lower-bounded by
	\begin{IEEEeqnarray}{rCl}
		R_{u}\supp{snn}
		&=& \frac{\nd}{n\sub{c}S}\sup_{\lambda>0}\sum_{k\in\setS} \Biggl[
			-\lambda \Gamma_{u,k} \log e\nonumber\\
			&&\hspace{1.5em}
			+ \lambda \Ex{}{\frac{\Psi_{u,k}}{1+\lambda\rho_{u,k} \abs{\hat{g}_{u,k}}^{2}}}
			\log e \nonumber\\
			&&\hspace{1.5em}
			+ \Ex{}{\log(1+\lambda \rho_{u,k}\abs{\hat{g}_{u,k}}^{2})}
			\Biggr]
		\label{eq:gmi-example}
	\end{IEEEeqnarray}
	where 	%
	\begin{IEEEeqnarray}{rCl}
		\Gamma_{u,k}
		&=& \rho_{u,k} \Ex{}{\abs{\hat{g}_{u,k}-g_{u,k}}^{2}} \nonumber\\
		&&+ \sum_{v\neq u}\rho_{v,k}\Ex{}{\abs{g_{v,k}}^{2}}
		+ \sigma_{u,k}^{2},\\
		\Psi_{u,k}
		&=& \rho_{u,k}\abs{g_{u,k}}^{2}
		+ \sum_{v\neq u}\rho_{v,k}\Ex{}{\abs{g_{v,k}}^{2}}
		+ \sigma^{2}_{u,k}.
	\end{IEEEeqnarray}
	All expectations are over the actual channels
	$\bar{\matH}$ and their estimates $\hat\matH$.\footnote{We use the same notation convention as in
		Theorem~\ref{thm:ergodic-ml-decoder}.}

\end{thm}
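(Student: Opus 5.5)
The plan is to use the standard GMI lower bound for mismatched decoding with i.i.d.\ Gaussian codebooks (Kaplan--Shamai / Lapidoth; see also~\cite{ganti00_11a}). For a metric of the form $q(y,\tilde s)=\exp(-\abs{y-\hat g\tilde s}^2)$, the GMI per channel use is
\begin{equation*}
I\sub{gmi}=\sup_{\lambda>0}\Ex{}{\log\frac{q(y,s)^{\lambda}}{\Ex{\tilde s}{q(y,\tilde s)^{\lambda}}}},
\end{equation*}
with $\tilde s\distas\jpg(0,\rho_{u,k})$ independent of $(y,s,\hat g)$. I will apply this to the block-fading model~\eqref{eq: io relation block fading}, treating $(y^{(\ell)}_{j,u,k},\hat g^{(\ell)}_{u,k})$ as the channel output. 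Since the blocks are i.i.d.\ across $\ell$ and the decoding metric in~\eqref{eq:snn-decoder} is additive, the GMI of the super-channel over one block, normalized by $n\sub{c}S$ symbols and with $n\sub{b}\to\infty$, is achievable by the usual law-of-large-numbers argument over blocks. A single $\lambda$ is used across all $k$ and $j$, which is why the supremum sits outside the sum. Because the symbols within a block are identically distributed across $j$, the sum over $j$ just contributes the factor $\nd$. The possible dependence across $j$ and $k$ is harmless: the GMI expression only involves the expectation of the additive metric, and the denominator factorizes because the codeword entries $\tilde s$ are independent.

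The core computation is per $(u,k)$. First, by a Gaussian integral,
\begin{equation*}
\Ex{\tilde s}{e^{-\lambda\abs{y-\hat g\tilde s}^2}}=\frac{1}{1+\lambda\rho\abs{\hat g}^2}\exp\lefto(-\frac{\lambda\abs{y}^2}{1+\lambda\rho\abs{\hat g}^2}\right).
\end{equation*}
The GMI term is therefore
\begin{equation*}
\Big[-\lambda\Ex{}{\abs{y-\hat g s}^2}+\lambda\Ex{}{\abs{y}^2/(1+\lambda\rho\abs{\hat g}^2)}\Big]\log e+\Ex{}{\log(1+\lambda\rho\abs{\hat g}^2)}.
\end{equation*}
Next I evaluate the two moments. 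Writing $y-\hat g s=(g-\hat g)s+\sum_{v\ne u}g_v s_v+z$ and using that the symbols are independent, zero-mean, and independent of $(g,\hat g)$, and that $z$ is uncorrelated with the symbols, the cross terms vanish and we get $\Ex{}{\abs{y-\hat gs}^2}=\Gamma_{u,k}$.

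For the second moment, I condition on $(g,\hat g)$ first, i.e., I take the expectation over symbols and noise with the channels fixed. This gives $\rho_{u,k}\abs{g_{u,k}}^2+\sum_{v\ne u}\rho_{v,k}\abs{g_{v,k}}^2+\Ex{}{\abs{z}^2\given\cdot}$. The statement instead has $\Psi_{u,k}$, with the interference and noise replaced by their unconditional means. This is the main obstacle. One option is to show that the replacement is a valid lower bound, using the fact that $\frac{1}{1+\lambda\rho\abs{\hat g}^2}$ is a function of $\hat g$ only. The cleaner option, which I would try first, is to introduce the auxiliary channel in which interference plus noise is replaced by an independent term with the same variance. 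I would then argue that the GMI with the SNN metric depends on the joint law of $(y,s,\hat g)$ only through these second moments, once the interference-plus-noise is treated as uncorrelated with $\hat g$. If that fails, I would fall back on a slightly more general metric or restrict to the independent case.

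Summing over $k\in\setS$, multiplying by $\nd/(n\sub{c}S)$ to account for the pilot overhead, and taking $\sup_\lambda$ then yields~\eqref{eq:gmi-example}.
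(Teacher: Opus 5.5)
\textbf{Same route as the paper, but the final step has a genuine gap.} Up to the last step your proof is the paper's Appendix~B. It uses the GMI bound for the product SNN metric (the paper invokes Kramer's Eq.~(10)) and the factorization over $(\ell,j,k)$ coming from the i.i.d.\ entries of $\tilde{\vecs}_u$. It then applies the Gaussian integral over $\tilde s$ and the identity $\Ex{}{\abs{y_{u,k}-\hat g_{u,k}\bar s_{u,k}}^2}=\Gamma_{u,k}$. One small caveat: for this identity, and for the cross terms in $\abs{y_{u,k}}^2$, you need $z_{u,k}$ to be uncorrelated with the symbols conditionally on $(\bar\matH,\hat\matH)$, not only unconditionally. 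The Bussgang noise of Section~IV satisfies this.

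The gap is the step you flag yourself, and neither of your remedies closes it. Let $\phi_\lambda=1/(1+\lambda\rho_{u,k}\abs{\hat g_{u,k}}^2)$ and $I_{u,k}=\sum_{v\neq u}\rho_{v,k}\abs{g_{v,k}}^2+\Ex{}{\abs{z_{u,k}}^2\given\bar\matH,\hat\matH}$. The exact objective contains $\lambda\Ex{}{(\rho_{u,k}\abs{g_{u,k}}^2+I_{u,k})\phi_\lambda}\log e$. All other terms coincide, so the exact objective minus the $\Psi_{u,k}$-version equals $\lambda\,\mathrm{Cov}(I_{u,k},\phi_\lambda)\log e$.

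\textbf{First remedy.} Replacing $I_{u,k}$ by its mean is a lower bound only if $\mathrm{Cov}(I_{u,k},\phi_\lambda)\ge 0$, and nothing in the hypotheses guarantees this. Since $\phi_\lambda$ decreases in $\abs{\hat g_{u,k}}$, the condition fails whenever interference-plus-noise power grows with $\abs{\hat g_{u,k}}$. A concrete case:
\begin{itemize}
\item take unnormalized MR combining, so $g_{v,k}=\herm{\hat\vech}_{u,k}\bar\vech_{v,k}$ in user $u$'s stream and $\hat g_{u,k}=\|\hat\vech_{u,k}\|^2$;
\item take $\bar\vech_{v,k}$, $v\neq u$, independent of $\hat\vech_{u,k}$ with covariance $\beta\matI_B$;
\item then $\Ex{}{I_{u,k}\given\hat\vech_{u,k}}=(\beta\sum_{v\neq u}\rho_{v,k}+N_0)\|\hat\vech_{u,k}\|^2$.
\end{itemize}
The covariance is then strictly negative, and the $\Psi_{u,k}$-expression exceeds the exact GMI objective for every $\lambda>0$.

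\textbf{Second remedy.} The auxiliary-channel argument is circular. The SNN GMI does depend only on second moments, but on the one conditional on the decoder's side information, through $\Ex{}{\abs{y_{u,k}}^2\given\hat g_{u,k}}$. That quantity contains $\Ex{}{I_{u,k}\given\hat g_{u,k}}$. Replacing interference plus noise by an independent surrogate with the same unconditional variance changes exactly this quantity, so the auxiliary channel does not have the same GMI. Changing the metric would change the decoder the theorem is about, so it does not help either.

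\textbf{What your argument does prove.} It establishes the bound with $\Psi_{u,k}$ replaced by the instantaneous quantity $\rho_{u,k}\abs{g_{u,k}}^2+I_{u,k}$. It also establishes the bound as stated under an extra hypothesis, such as $\mathrm{Cov}(I_{u,k},\phi_\lambda)\ge 0$ for all $\lambda>0$; this holds, for example, when $I_{u,k}$ is independent of $\hat g_{u,k}$, which is your fallback.

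The paper's Appendix~B does not supply the missing step either. It asserts that $\Psi_{u,k}$ follows from ``computing in closed form the outer expectation with respect to $\bar s$ and $z$''. Carrying out that expectation exactly produces the instantaneous interference and noise powers, just as you found. So a complete proof must either add such a hypothesis or modify $\Psi_{u,k}$; the stated expression does not follow from the GMI computation alone.
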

\begin{IEEEproof}
	See Appendix~\ref{app:proof-snn-decoder}.
\end{IEEEproof}

\section{Application to the \gls{dmimo} with 1-Bit \Gls{rof} System Model}
\label{sec:sec4}
In this section, we apply the lower bounds on the achievable rate derived in Section~\ref{sec:section3} to \gls{dmimo} systems with 1-bit \gls{rof} fronthaul using the input-output relation obtained in~\eqref{eq: zkbbBussgangdecomposition}. Specifically, we consider a modified version of~\eqref{eq: zkbbBussgangdecomposition} where we replace $\bar{\vecz}_k\supp{bb}$ with $\vecz_{j,k}^{\mathrm{bb},(\ell)}$ to indicate explicitly the dependence on the symbol index $j$ and fading-block index $\ell$. All other quantities in~\eqref{eq: zkbbBussgangdecomposition} are modified accordingly. Similarly to Section~\ref{sec:section3}, we assume that, within each fading block, the first $n\sub{p}$ symbols are pilot symbols used by the \gls{cpu} to obtain an estimate $\hat{\matH}_k^{(\ell)}$ of the channel matrix  $\bar{\matH}_k^{(\ell)}$. Let $\matB_k^{(\ell)}~\in~\complexset^{U \times B}$ be the combining matrix for subcarrier $k$ in block $\ell$, and let $\tp{(\vecb_{u,k}^{(\ell)})}$ indicate the $u$th row of $\matB_k^{(\ell)}$.

To decode the $j$th symbol for user~$u$ on subcarrier~$k$ of block~$\ell$, the \gls{cpu} computes 
\begin{equation}
	\label{eq: yjkell}
	y_{j,u,k}^{(\ell)} = \tp{(\vecb_{u,k}^{(\ell)})} \vecz_{j,k}^{\mathrm{bb},(\ell)}.
\end{equation}
It follows from~\eqref{eq: zkbbBussgangdecomposition} that~\eqref{eq: yjkell} can be expressed as~\eqref{eq: io relation block fading} by setting
\begin{equation}
	\label{eq: true effictive channel 1bit}
	g_{u,k}^{(\ell)} = \tp{(\vecb_{u,k}^{(\ell)})} \matG^{(\ell)} \bar{\vech}_{u,k}^{(\ell)},
\end{equation}
where $\bar{\vech}_{u,k}^{(\ell)}$ denotes the $u$th column of $\bar{\matH}_k^{(\ell)}$, and 
\begin{align}
	\label{eq: effective noise 1bit}
	n_{j,u,k}^{(\ell)} = \tp{(\vecb_{u,k}^{(\ell)})} \lefto[ \matG^{(\ell)} (\bar{\vecw}_{j,k}^{(\ell)} + \bar{\vecd}_{j,k}^{(\ell)}) + \bar{\vece}_{j,k}^{(\ell)} \right].
\end{align}
It is also useful to identify the estimate $\hat{g}_{u,k}^{(\ell)}$ of the effective channel as 
\begin{equation}
	\label{eq: estimated effictive channel 1bit}
	\hat{g}_{u,k}^{(\ell)} = \tp{(\vecb_{u,k}^{(\ell)})} \est{\matG}^{(\ell)} \hat{\vech}_{u,k}^{(\ell)},
\end{equation}
where
\begin{equation}
	\est{\matG}^{(\ell)}
	= \sqrt{\frac{2}{\pi}}\,
	\diag\left(\est{\matR}_q^{(\ell)}[0]\right)^{-1/2},
\end{equation}
and 
\begin{equation}
	\est{\matR}_q^{(\ell)}[0]
	= \frac{1}{N}\,\Re\Big\{\sum_{k \in \setS}
	(\est{\matH}_k^{(\ell)} \matP_k \herm{(\est{\matH}_k^{(\ell)})} + N_0 \matI_B)\Big\}
	+ \frac{\matD}{2}.
\end{equation}
Finally, the conditional noise variance $(\sigma_{u,k}^{(\ell)})^2$ given $\bar{\matH}^{(\ell)}$ and  $\hat\matH^{(\ell)}$ is
\begin{IEEEeqnarray}{rCl}
  (\sigma_{u,k}^{(\ell)})^2
  &=&
  \tp{(\vecb_{u,k}^{(\ell)})}
  \!\Big(
  \matG^{(\ell)} (\matD + N_0 \matI_B)\matG^{(\ell)}
  \nonumber\\
  &&\quad{}
  + \matC_{\bar{\vece}_k}^{(\ell)}
  \Big)
  \conj{(\vecb_{u,k}^{(\ell)})} ,
  \end{IEEEeqnarray}
and $\sigma_{u,k}^2$ in~\eqref{eq: deltauatf} can be obtained by averaging over the pair $\lefto( \bar{\matH}^{(\ell)}, \hat{\matH}^{(\ell)}\right)$.
\section{Simulation Results}\label{sec:simulation-results}
Unless otherwise stated, in all simulations we use the \gls{blmmse} channel estimator in~\eqref{hBLM1} and the \gls{blmmse} combiner~\cite{hu24-11a}.
Furthermore, each \gls{ue} is constrained to use the same transmit power over all 
subcarriers. In all subsections except the power-control study in Section~\ref{sec:power-control}, the transmit power is also the same across
\glspl{ue}. Specifically, $\matP_k=\rho\matI_U$ for all $k\in\setS$, with $\rho = \qty{20}{dBm}$. The power of the additive noise is $N_0 = \qty{-94}{dBm}$. Furthermore, we set $n\sub{c} = \num{2000}$, and use the same transmit power and the same dither power matrix $\matD$ in both the pilot and the data transmission phase. Finally, we define the pathloss coefficient  $\beta_{b,u}$ in~\eqref{eq: pathloss_label} as~\cite[pp.~67--68]{3GPP_TR_36_814}
\begin{align}
	\beta_{b,u}= -37.6\log_{10}(d_{b,u})-\qty{35.3}{dB},
\end{align}
where $d_{b,u}$ is the three-dimensional distance between \gls{ap} $b$ and \gls{ue} $u$ in meters. 
\subsection{The Effect of Oversampling Rate}
\label{sec: VA}
We start by investigating the effect of the oversampling rate $N/S$ on the reverse-channel, \gls{uatf}, and \gls{gmi} bounds. We set $U=B=1$, $S=1$, $\np=1$, since computing the reverse-channel bound in~\eqref{eq:ml-bound} is numerically challenging for higher values of these parameters. Indeed, the computational complexity of the reverse-channel bound scales as $\mathcal{O}\left(S\,2^{n_{\mathrm p}BN}\right)$. Furthermore,
we set the distance between the \gls{ue} and the \gls{ap} to $d=\qty{36.7}{m}$, and $W = \qty{960}{MHz}$, which allows us to consider values of $N$ as small as $7$ while avoiding spectral aliasing. For each value of $N/S$, the \gls{sdr} is optimized via grid search, and the resulting rates are plotted in Fig.~\ref{fig: ratesvsn}.

We observe that the reverse-channel bound yields the largest rate estimates. This is expected since this bound pertains to the optimal single-user \gls{ml} decoding rule. However, it can only be evaluated for small oversampling rates: in Fig.~\ref{fig: ratesvsn}, we were able to evaluate it only up to $N/S=11$, which is far below the oversampling rates used in current testbeds ($f\sub{s} / W = N/S = 250$ in~\cite{aabel24-10a,aabel25-11a}). We also observe that the \gls{gmi} bound is above the \gls{uatf} bound for this single-\gls{ap} setup. This is reasonable, since in this setup the instantaneous effective channel does not concentrate around its mean; so using $\Ex{}{g_{u,k}}$ instead of $\hat{g}_{u,k}$ is suboptimal.

The dashed lines show the asymptotic values ($N \to \infty$) for the \gls{uatf} and \gls{gmi} achievable rates. These values are obtained by computing the rates at a very high oversampling rate. Specifically, we set $N/S = \num{40500}$. As shown in Fig.~\ref{fig:nmse-vs-n}, further increasing $N/S$ would result in a marginal reduction of the channel estimation error.

Our results indicate that at an oversampling rate of $N/S= 250$, which is the one used in the testbeds developed in~\cite{aabel24-10a,aabel25-11a}, the \gls{uatf} bound is around $28\%$ below its asymptotic value. Similarly, the \gls{gmi} bound is $37\%$ below its asymptotic value. This suggests that increasing the oversampling rate $N/S$ beyond $250$ can improve the rates substantially.

\begin{figure}
	\centering
	\includegraphics[width=0.85\linewidth]{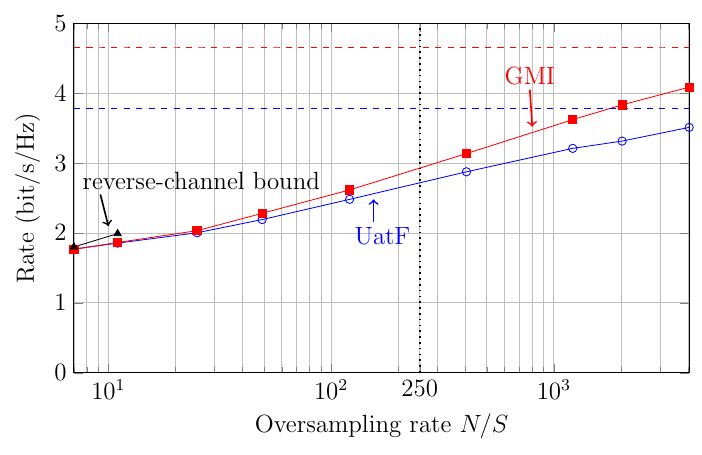}
	\caption{\gls{uatf}, \gls{gmi}, and reverse-channel bounds for different values of $N/S$. The dashed lines show the asymptotic rates.}
	\label{fig: ratesvsn}
\end{figure}

\begin{figure}
	\centering
	\includegraphics[width=0.85\linewidth]{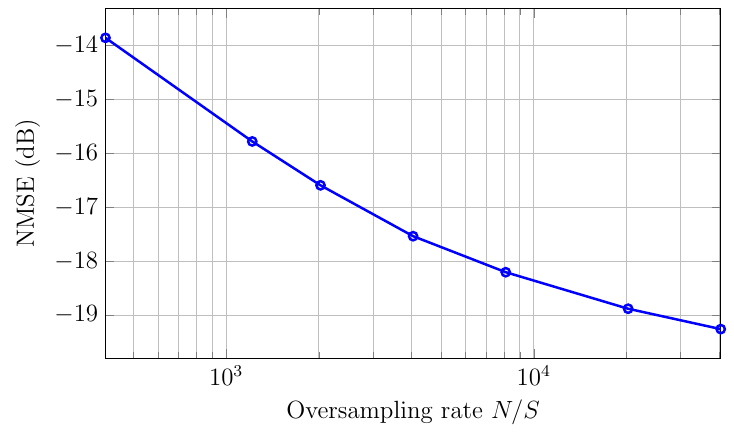}
	\caption{Channel-estimation \gls{nmse} as a function of the oversampling rate $N/S$, for the system parameters specified in Section~\ref{sec: VA}.}
	\label{fig:nmse-vs-n}
\end{figure}

\subsection{Realistic System Parameters and Optimization of the \gls{sdr}}
\label{sec: VB}
We now set the system parameters to match the ones used in the testbed~\cite{aabel24-10a,aabel25-11a}, i.e., $W=\qty{100}{MHz}$, $S=9$, and $\np=2U$.

Throughout this section, we consider the deployment scenario depicted in Fig.~\ref{fig:dmimo-topologies}.
The service area is a $W_{\mathrm r}\times W_{\mathrm r}$ square with $W_{\mathrm r}=\qty{100}{m}$. The $B=16$ \glspl{ap} are placed on a regular $4\times4$ grid at a height of $\qty{10}{m}$, and the \glspl{ue} are at ground level. In the single-user case, the \gls{ue} is placed at the center of the service area (see Fig.~\ref{fig:dmimo-topology-U1}). For the multiuser case, the four \glspl{ue} are placed as shown in Fig.~\ref{fig:dmimo-topology-MU}. 
\begin{figure}[t]
	\centering
	\begin{subfigure}[t]{0.49\columnwidth}
		\centering
		\includegraphics[width=\linewidth]{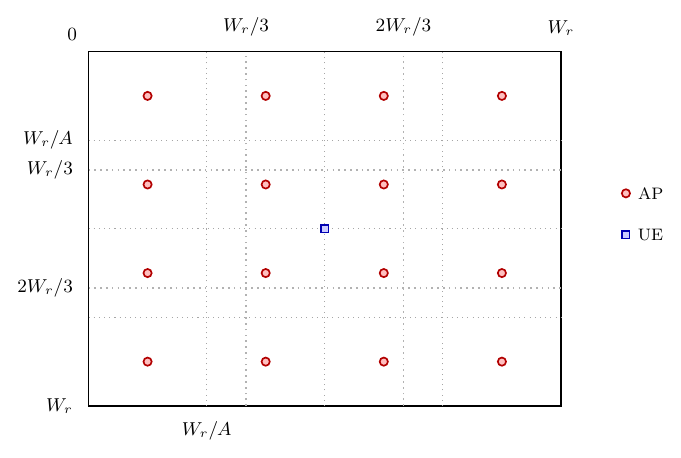}
		\caption{$U=1$, $B=16$}
		\label{fig:dmimo-topology-U1}
	\end{subfigure}
	\hfill 
	\begin{subfigure}[t]{0.49\columnwidth}
		\centering
		\includegraphics[width=\linewidth]{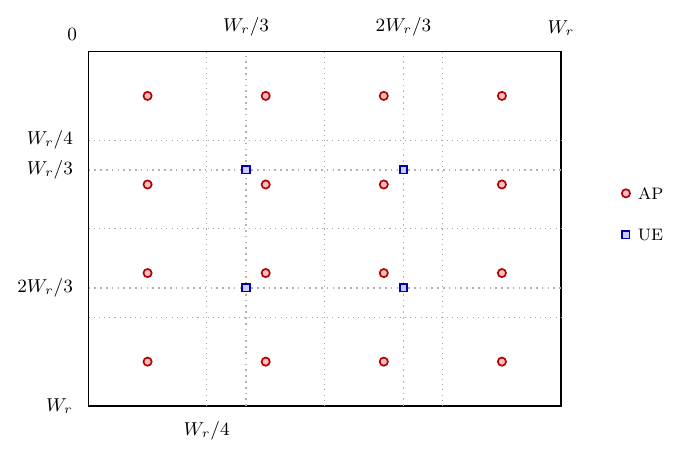}
		\caption{$U=4$, $B=16$}
		\label{fig:dmimo-topology-MU}
	\end{subfigure}
	\caption{Distributed MIMO topologies.}
	\label{fig:dmimo-topologies}
\end{figure}

\begin{figure}
	\centering
	\includegraphics[width=0.85\linewidth]{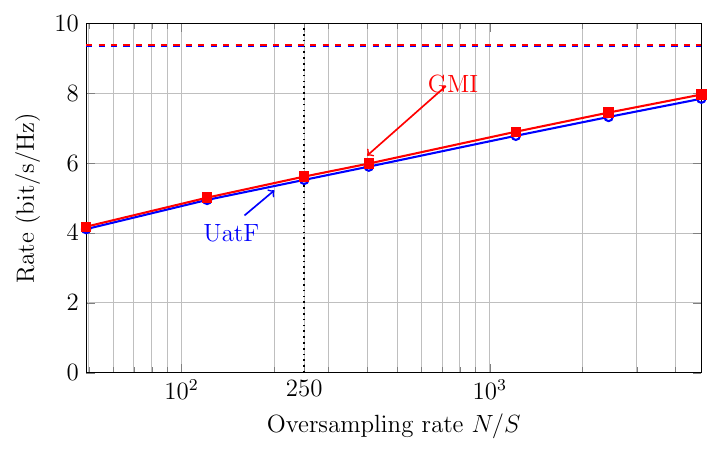}
	\caption{\gls{uatf} and \gls{gmi} bounds for different values of $N/S$ for the multiuser scenario depicted in Fig.~\ref{fig:dmimo-topology-MU}. The dashed lines show the asymptotic rates.}
	\label{fig: u4b16oversampling}
\end{figure}

\begin{figure}
	\centering
	\includegraphics[width=0.85\linewidth]{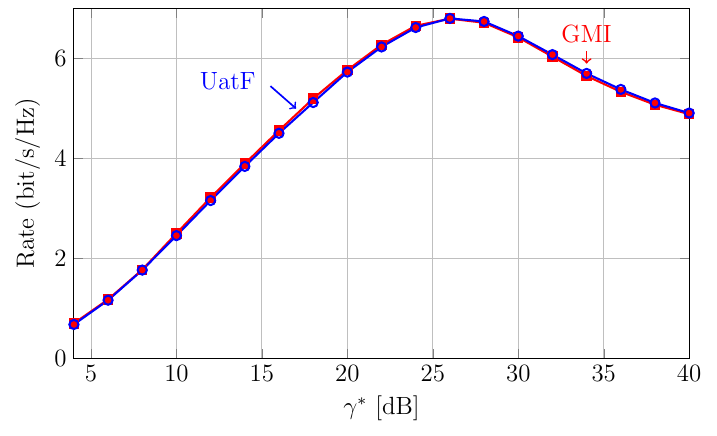}
	\caption{\gls{uatf} and \gls{gmi} bounds for $U=1$ and $B=16$.}
	\label{fig: u1b16boundstestbed}
\end{figure}

\begin{figure}
	\centering
	\includegraphics[width=0.85\linewidth]{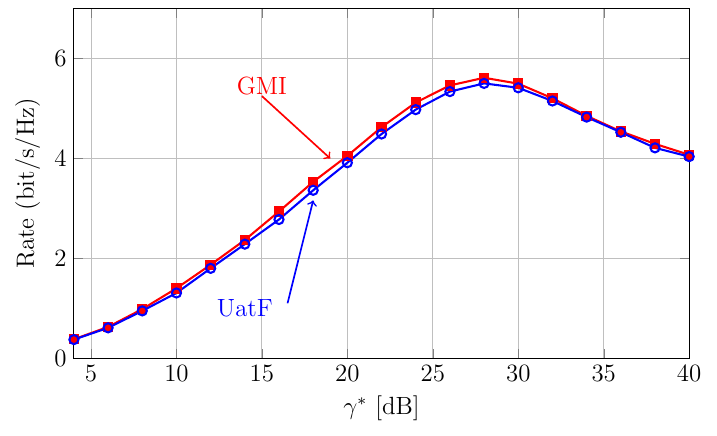}
	\caption{\gls{uatf} and \gls{gmi} bounds for $U=4$ and $B=16$.}
	\label{fig: u4b16boundstestbed}
\end{figure}

We investigate the impact of oversampling rate $N/S$ for the multiuser case in Fig.~\ref{fig: u4b16oversampling}. We observe that, at the oversampling rate of $N/S = 250$, both the \gls{uatf} and the \gls{gmi} bounds are about $40\%$ lower than their asymptotic values. In agreement with the observations from Fig.~\ref{fig: ratesvsn}, this suggests that, also for this more realistic scenario, increasing the oversampling rate $N/S$ beyond $250$ can improve the achievable rates significantly. The reverse-channel bound cannot be evaluated for the system parameters considered in this section.

 Next, we set $N/S = 250$, and investigate the sensitivity of the achievable rate to the target \gls{sdr} $\gamma^\star$, for the ideal \gls{vga} case, i.e., \gls{vga} with infinite dynamic range. We will discuss the impact of the finite dynamic range in Section~\ref{sec:power-control}.
 
In Figs.~\ref{fig: u1b16boundstestbed} and~\ref{fig: u4b16boundstestbed} we plot the \gls{uatf} and the \gls{gmi} bounds as a function of the \gls{sdr} $\gamma^\star$. 

We observe from Figs.~\ref{fig: u1b16boundstestbed} and~\ref{fig: u4b16boundstestbed} that both bounds are sensitive to $\gamma^\star$, which highlights the need to correctly tune the \gls{sdr} in  \gls{dmimo} systems with 1-bit \gls{rof} fronthaul. For $U=1$, both the \gls{uatf} and \gls{gmi} bounds are maximized at $\gamma^\star=\qty{26}{dB}$, with peak rates of $\qty{6.81}{bit/s/Hz}$ and $\qty{6.80}{bit/s/Hz}$, respectively. For $U=4$, both bounds are maximized at $\gamma^\star=\qty{28}{dB}$, with corresponding rates of $\qty{5.50}{bit/s/Hz}$ for \gls{uatf} and $\qty{5.62}{bit/s/Hz}$ for \gls{gmi}. Interestingly, the gap between the \gls{gmi} and the \gls{uatf} is minimal in both scenarios. In Fig.~\ref{fig: u1b16boundstestbed}, the small gap is probably due to the symmetric deployment scenario, which induces a certain degree of channel hardening. In Fig.~\ref{fig: u4b16boundstestbed}, the small gap suggests that imperfect multiuser interference cancellation, rather than the imperfect channel-state information at the decoder, is the main cause of performance degradation, together with quantization errors.

\subsection{Random User Placement}
We next use the same \qty{100}{m} by \qty{100}{m} deployment and $4\times4$ \gls{ap} grid as in Fig.~\ref{fig:dmimo-topologies}, but place the $U=4$ \glspl{ue} uniformly at random over the service area. The rates are evaluated over $\num{1000}$ random placements by fixing $\gamma^\star = \qty{28}{dB}$ as suggested by Fig.~\ref{fig: u4b16boundstestbed}. In Fig.~\ref{fig: randomuserdrop}, we show the resulting empirical \gls{cdf}. The small gap between the \gls{gmi} and \gls{uatf} bound, which is in agreement with Fig.~\ref{fig: u4b16boundstestbed}, indicates that the two bounds yield similar rate predictions for the considered system parameters when the \glspl{ue} are placed at random. Specifically, the $10$th percentiles of the empirical \gls{cdf} for the \gls{gmi} and \gls{uatf} rates are \qty{4.41}{bit/s/Hz} and \qty{4.29}{bit/s/Hz}, respectively.

\begin{figure}
	\centering
	\includegraphics[width=0.85\linewidth]{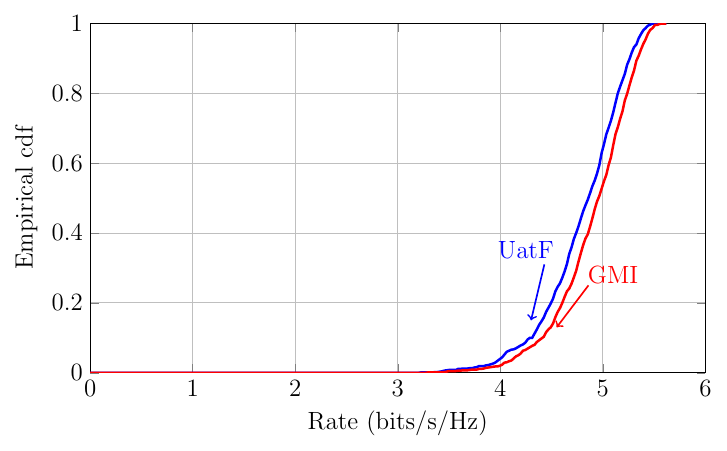}
	\caption{Empirical \gls{cdf}  for $U=4$ and $B=16$. The \gls{sdr} for data transmission is set to \qty{28}{dB}.}
	\label{fig: randomuserdrop}
\end{figure}

\subsection{Max-Min Fairness in the Presence of the \gls{vga} with Finite Dynamic Range}
\label{sec:power-control}
\paragraph*{Problem formulation}
We consider the max-min fairness problem, which involves adapting the per-\gls{ue} transmit powers so that the rate for the weakest \gls{ue} is maximized. Solving the max-min fairness problem is challenging in the \gls{dmimo} systems with 1-bit
\gls{rof} fronthaul because the transmit powers do not only affect multiuser interference. They
also determine the \gls{vga} input powers and, therefore, the \gls{sdr}, the
Bussgang gain, and the covariance matrix of the  quantization noise. As a result,
increasing the power of a given \gls{ue} does not necessarily increase its rate.

Let $\rho_u$ be the per-subcarrier power level used by \gls{ue}~$u$. We assume that $\rho_u \leq \rho\sub{max}$, where $\rho\sub{max}$ is the maximum power per subcarrier.
Using for tractability the \gls{uatf} bound as objective, we formulate max-min fairness problem as follows:
\begin{IEEEeqnarray}{rCl}
	\underset{\{\rho_{u}\}}{\mathrm{maximize}}
	&& \min_{u=1,\ldots,U} R\supp{uatf}_u(\{\rho_u\}) \nonumber\\
	\mathrm{subject\ to} \;
	&& 0 \le \rho_{u}\le \rho\sub{max}, \quad u=1,\ldots,U.
	\label{eq:power-control-maxmin}
\end{IEEEeqnarray}

Since the rate functions in~\eqref{eq:power-control-maxmin} depend on $\{\rho_u\}$ in an intricate and nonconventional way, we
use a \gls{ga} to solve~\eqref{eq:power-control-maxmin}, since this algorithm does not require convexity, differentiability, or monotonicity of the objective. In our implementation, each
chromosome of the \gls{ga} encodes the powers $\{\rho_{u}\}$, the fitness is the objective in
\eqref{eq:power-control-maxmin}, and the population is evolved using selection, crossover,
mutation, projection onto the feasible power interval, as well as elitist replacement~\cite{eiben2015introduction}.

\paragraph*{Numerical results}
We next describe numerical experiments in which we illustrate the negative impact of the finite dynamic range of the \gls{vga} on the achievable rate computed via the \gls{uatf} bound. 
Throughout this section, we consider the system parameters specified in Section~\ref{sec: VB}, with the exception of the transmit power (which we shall vary/optimize to isolate the impact of the finite dynamic range of the \glspl{vga}) and the deployment scenario.

As far as this last point is concerned, we consider the setup in Fig.~\ref{fig:agc-saturation-map}: an indoor scenario, similar to the one described in~\cite{aabel25-11a}, in which four \glspl{ap}, denoted as \gls{ap} 1, \gls{ap} 2, \gls{ap} 3, and \gls{ap} 4, located in positions $(2,2)$, $(2,6)$, $(6,2)$, $(6,6)$, respectively, provide coverage over a \qty{4}{m} $\times$ \qty{4}{m} area. We also assume that the \glspl{ap}, which are mounted at a height of \qty{0.3}{m}, serve 2 \glspl{ue} located on the ground plane, one in the middle of the coverage area, in position $(4,4)$, and one close to \gls{ap} 4, in position $(5.9, 5.9)$. We denote these two \glspl{ue} as \gls{ue} 1 and \gls{ue} 2, respectively. Finally, we set the maximum transmit power $\rho\sub{max}$ to \qty{-9.54}{dBm}.

We start by considering the case in which the \glspl{vga} have infinite dynamic range and set $\rho_u = \rho\sub{max}$, $u \in \{1,2 \}$.
For this case, the \gls{uatf} yields the following achievable rate estimates:
$R_1\supp{uatf} = \qty{4.57}{bit/s/Hz}$ and $R_2\supp{uatf} = \qty{3.93}{bit/s/Hz}$. As expected, \gls{ue} 1 enjoys a slightly larger rate than \gls{ue} 2 as it can better leverage macro diversity, whereas \gls{ue} 2 does not suffer or benefit from the proximity to \gls{ap} 4, since the \gls{vga} is able to enforce the desired \gls{sdr} $\gamma^\star = \qty{28}{dB}$. 

Let us now consider the case in which the dynamic range of the \gls{vga} is finite, according to the model given in~\eqref{eq:actual-signal-to-dither-ratio}. It turns out that when $\rho_u = \rho\sub{max}$, $u \in \{1, 2\}$, the values of $P_b\supp{lna}$ at the four \glspl{ap} are \qty{-15.4}{dBm}, \qty{-15.1}{dBm}, \qty{-20.9}{dBm}, and \qty{15.7}{dBm}. Note in particular that the \gls{vga} at \gls{ap} 4 operates outside its dynamic range, according to~\eqref{eq:actual-signal-to-dither-ratio}; so it cannot maintain the desired \gls{sdr}. This has a negative impact on the achievable rates, which drop to  $R_1\supp{uatf} = \qty{4.47}{bit/s/Hz}$ and  $R_2\supp{uatf} = \qty{3.18}{bit/s/Hz}$, respectively. This is a reduction of $2\%$ for \gls{ue} 1 but of $19\%$ for \gls{ue} 2. To mitigate the negative effect of the \glspl{vga}' dynamic range on $R_2\supp{uatf}$, we solve the max-min fairness optimization problem in~\eqref{eq:power-control-maxmin} using the \gls{ga} described earlier. This optimization algorithm yields $\rho_1 = \qty{-23.19}{dBm}$ and $\rho_2 = \qty{-19.51}{dBm}$, which result in $R_1\supp{uatf} = \qty{4.14}{bit/s/Hz}$ and $R_2\supp{uatf} = \qty{4.12}{bit/s/Hz}$. A few observations are in order. 
\begin{inparaenum}[(i)]
    \item The power levels are well below $\rho\sub{max}$. This is not surprising, since the \glspl{vga} have a \qty{45}{dB} dynamic range, and with $\rho_u = \qty{-9.54}{dBm}$ (as considered in our first two experiments), the system operates at the upper end of the dynamic range.
    \item The much lower transmit powers result in a significant increase of the rate achievable by \gls{ue} 2 at the cost of a slight reduction of the rate achievable by \gls{ue} 1.
    \item Surprisingly, at the optimal power levels, \gls{ap} 4 operates still outside its dynamic range. Indeed, the corresponding $P_b\supp{lna}$ value is \qty{5.76}{dBm}. 
\end{inparaenum}	
To shed light on this finding, we conduct an additional experiment in which we investigate how $R_1\supp{uatf}$ and $R_2\supp{uatf}$ vary as a function of $\rho_2$ when $\rho_1$ is fixed to the value~\qty{-23.19}{dBm} found by the \gls{ga}. The results are shown in Fig.~\ref{fig:ap-ablation-ue2-power}, where we also plot the rate achievable by \gls{ue} 2 when only \gls{ap} 4 is active and when only \glspl{ap} 1--3 are active. As shown in the figure, the optimal solution of the max-min fairness problem is achieved at a $\rho_2$ value above the one needed to guarantee that the \gls{vga} at \gls{ap} 4 operates within its dynamic range.

The rate curves for the case in which only a subset of \glspl{ap} are active offer a possible explanation: the slight reduction in the rate achievable when only \gls{ap} 4 is active, incurred when operating at  a $\rho_2$ level above the one needed to remain in the \gls{vga} dynamic range, is outweighed by the increase in the rate achievable when only \glspl{ap} 1--3 are active.
\begin{figure}[t]
	\centering
	\includegraphics[width=0.8\linewidth]{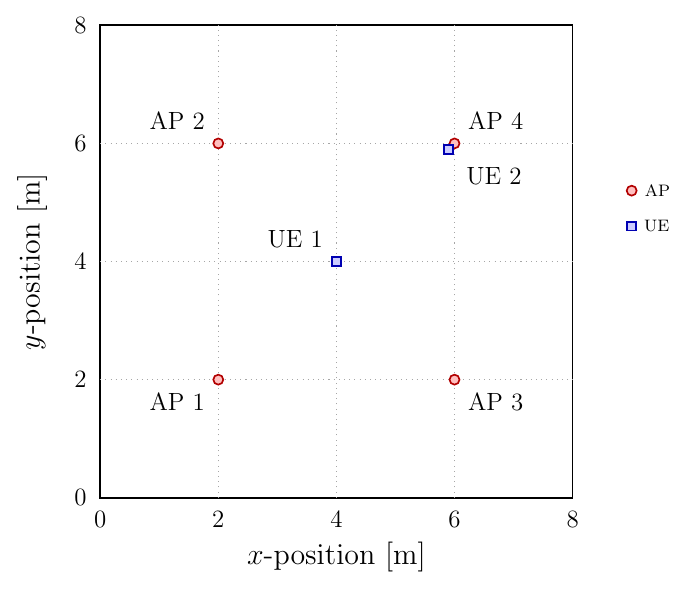}
	\caption{The selected geometry for the max-min fairness scenario.}
	\label{fig:agc-saturation-map}
\end{figure}
\begin{figure}[t]
	\centering
	\includegraphics[width=0.85\linewidth]{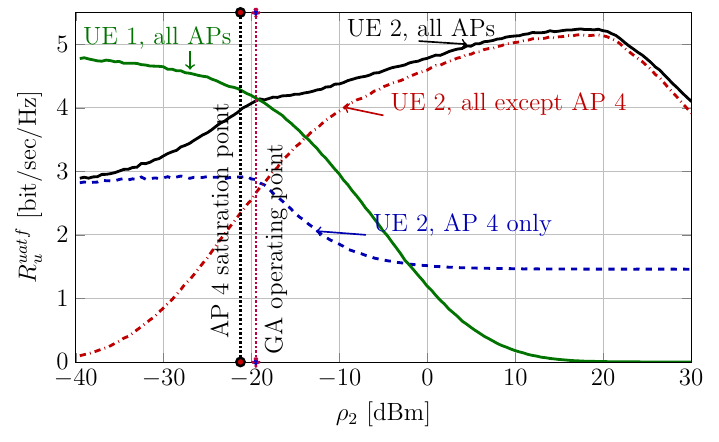}
	\caption{Achievable rates for \gls{ue} 1 and \gls{ue} 2 as a function of the transmit power of \gls{ue} 2 when the transmit power of \gls{ue} 1 is fixed to \qty{-23.19}{dBm}. We also illustrate separately the rates achievable by \gls{ue} 2 when only \gls{ap} 4 is active and when only \glspl{ap} 1--3 are active.}
	\label{fig:ap-ablation-ue2-power}
\end{figure}
\section{Conclusion}
\label{sec: conclusion}
We have derived bounds on the uplink rates achievable over a \gls{dmimo} system with 1-bit
\gls{rof} fronthaul. The bounds generalize previously available bounds in the \gls{dmimo}
literature in that they account for the specific nonlinearities present in the \gls{dmimo} systems with 1-bit
\gls{rof} fronthaul, and they hold irrespective of the algorithm used for channel estimation. In
particular, they do not require \gls{mmse} channel estimates, which are difficult to obtain in 1-bit
architectures.

The reverse-channel bound yields the largest rate estimates. Unfortunately, this bound involves the computation of conditional expectation terms that are numerically challenging to evaluate exactly for  system parameter values of practical interest. On the contrary, both the \gls{gmi} and the \gls{uatf} bound can be evaluated in practically relevant scenarios. Interestingly, our numerical experiments indicate that both bounds result in similar achievable-rate estimates.

We used the bounds to assess the impact of oversampling on the achievable rates and their sensitivity
to the chosen \gls{sdr}.
Our results indicate that using higher oversampling rates in the experiments reported in~\cite{aabel24-10a,aabel25-11a} may lead to substantial rate improvements.
Furthermore, our results point out  that the achievable rates are highly sensitive to the
\gls{sdr}, and indicate that, with Gaussian dithering, an \gls{sdr} of \qty{28}{dB} should be targeted. Finally, we show that the finite dynamic range of the \glspl{vga} at the \glspl{ap} can significantly deteriorate the rates unless \gls{ue} transmit-power optimization is performed. We have cast this problem as a max-min fairness optimization and showed that the max-min fairness algorithms available for \gls{dmimo} are not suited to our system model, since the rate function in our setup has a nonstandard dependence on the transmit power. We have also shown numerically that the optimal power allocation strategy does not necessarily result in all \glspl{vga} operating within their dynamic range, since the required reduction in transmit power may limit the benefits of macrodiversity. We leave the development of a more numerically efficient max-min fairness algorithm for future work.

\appendices

\section{Proof of Theorem~\ref{thm:ergodic-ml-decoder}}\label{app:proof-ml-bound}
We first introduce some useful additional notation. We let $\bar{\vecs}_u^{(\ell)}$ denote the vector containing all symbols transmitted by user~$u$ in block~$\ell$ and ${\vecy}_u^{(\ell)}$ the corresponding received symbols after spatial combining. 
As in Section~\ref{sec: 2A}, $\bar{\vecs}_u$ and $\vecy_u$ denote the transmitted and received symbols over the $n\sub{b}$ blocks.

An achievable rate in bit/s/Hz over the channel in~\eqref{eq: io relation block fading} can be expressed as 
\begin{align}
	R_u & =
	\lim_{\nb\to\infty}\frac{1}{\nb}
	\frac{1}{n\sub{c}S}
	I\lefto(\bar{\vecs}_u;\vecy_u
	\given \{ \hat{\matH} ^{(\ell)} \}\right) \label{eq: rc step 1}\\ 
	 &= \lim_{\nb\to\infty}\frac{1}{\nb}
	\frac{1}{n\sub{c}S}
	\sum_{\ell = 1}^{\nb} I\lefto(\bar{\vecs}_u^{(\ell)};\vecy_u^{(\ell)}
	\given \{\hat{\matH}^{(\ell)}\}\right) \label{eq: rc step 2}\\
	&= \frac{n\sub{b}}{n\sub{c}S}
	 I\lefto(\bar{\vecs}_u^{(1)};\vecy_u^{(1)}
	\given \{\hat{\matH}^{(1)}\}\right). \label{eq: rc step 3}
\end{align}

Here,~\eqref{eq: rc step 1} follows from the channel coding theorem~\cite[Prop. 2.1]{sethuraman2005capacity}, and~\eqref{eq: rc step 2} follows because the channel in~\eqref{eq: io relation block fading} is block-memoryless and the input distribution is \iid, and~\eqref{eq: rc step 3} holds because the mutual information takes the same value over each block.

The exact evaluation of~\eqref{eq: rc step 3} is complicated because the entries in $\vecy_u^{(1)}$ are not conditionally \iid. To overcome this issue, we lower-bound the mutual information in~\eqref{eq: rc step 3} as follows:
 \begin{IEEEeqnarray}{rCl}
  \IEEEeqnarraymulticol{3}{l}{
  I\lefto(\bar{\vecs}_u^{(1)};\vecy_u^{(1)}
  \given \{\hat{\matH}^{(1)}\}\right)}
  \nonumber\\
  \quad &\geq&
  \sum_{j=1}^{n\sub{d}} \sum_{k \in \setS}
  I\lefto(\bar{s}_{j,u,k}^{(1)};y_{j,u,k}^{(1)}
  \given \{\hat{\matH}^{(1)}\}\right)
  \IEEEeqnarraynumspace
  \label{eq: rc step 4}
  \\
  &=&
  n\sub{d}\sum_{k\in\setS}
  I\lefto(\bar{s}_{1,u,k}^{(1)};y_{1,u,k}^{(1)}
  \given \hat{\matH}^{(1)}\right).
  \label{eq: rc step 5}
  \end{IEEEeqnarray}
Here,~\eqref{eq: rc step 4} follows from the chain rule for mutual information and the nonnegativity of mutual information, and in~\eqref{eq: rc step 5} we used the fact that, for a given $k$, the mutual information~$I\left(\bar{s}_{j,u,k}^{(1)};y_{j,u,k}^{(1)}\given \{\hat{\matH}^{(1)}\}\right)$ takes the same value for every $j \in \{1,\dots,n\sub{d}\}$ by our assumption on $\{z_{j,u,k}^{(\ell)}\}$. Substituting~\eqref{eq: rc step 5} into ~\eqref{eq: rc step 3}, we conclude that 
\begin{equation}
	\label{eq:iwanttoboundthiswithgmi}
	R_u \geq \frac{\nd}{n\sub{c}S}\sum_{k\in\setS}
	\,
	I(\bar{s}_{1,u,k}^{(\ell)};y_{1,u,k}^{(\ell)}\given\hat{\matH}^{(\ell)})
	.
\end{equation}
In what follows, we drop the indices $\ell = 1$ and $j = 1$ to keep the notation compact. To conclude the proof, we proceed as in the proof of~\cite[Eq.~(65)]{ganti00-11b}, and lower-bound the per-subcarrier mutual information $I(\bar{s}_{u,k};y_{u,k}\given\hat{\matH})$. Specifically, let $h(\cdot)$ denote the differential entropy. Then, 
\begin{align}
	 & I(\bar{s}_{u,k}; y_{u,k}\given \hat{\matH})
	= h(\bar{s}_{u,k})
	- h(\bar{s}_{u,k}\given y_{u,k},\hat{\matH})
	\label{eq:mi-diff-entr-1}                      \\
	 & = \log(\pi e \rho_{u,k})
	- h(\bar{s}_{u,k}\given y_{u,k}, \hat{\matH})
	\label{eq:mi-diff-entr-2}                      \\
	 & \geq \log(\pi e \rho_{u,k})
	- h(\bar{s}_{u,k}-\alpha y_{u,k}\given \hat{\matH})
	\label{eq:mi-diff-entr-3}                      \\
	 & \geq \log(\pi e \rho_{u,k})
	- \Ex{\hat{\matH}}{\log\Big(\pi e
	\varepsilon_{\alpha,k}(\hat {\matH})\Big)},
	\label{eq:mi-diff-entr-4}
\end{align}
where 
\begin{equation}
	\label{eq: epsilondef}
	\varepsilon_{\alpha,k}(\hat \matH)
	=
	\Ex{}{\abs{\bar{s}_{u,k}-\alpha y_{u,k}}^{2}\given\hat{\matH}}.
\end{equation}
Here,~\eqref{eq:mi-diff-entr-1} follows from the
definition of mutual information; in~\eqref{eq:mi-diff-entr-2}, we used that
$\bar{s}_{u,k}\distas\jpg(0,\rho_{u,k})$; in~\eqref{eq:mi-diff-entr-3}, which holds for every scalar $\alpha$,
we used that conditioning does not increase differential entropy; finally,
\eqref{eq:mi-diff-entr-4}
follows because Gaussian distributions maximize differential entropy under a second-moment
constraint.
Now choose $\alpha$ so as to minimize the mean square error
$\varepsilon_{\alpha,k}(\hat {\matH})$ in~\eqref{eq: epsilondef}.
This results in
\begin{equation}
	\begin{split}
		\varepsilon_{\alpha,k}(\hat {\matH})
		 & = \rho_{u,k}
		- \frac{\rho_{u,k}^{2}\abs{\Ex{}{g_{u,k}\given\hat{\matH}}}^{2}}
		{A_{u,k}(\hat{\matH})},
	\end{split}
	\label{eq:mmse-expression}
\end{equation}
where
\begin{equation}A_{u,k}(\hat{\matH}) =  \sum_{v=1}^U \rho_{v,k}
	\Ex{}{\abs{g_{v,k}}^{2} \given \hat{\matH}}
	+\Ex{}{\abs{z_{u,k}}^{2}\given \hat{\matH}}.
\end{equation}
By substituting~\eqref{eq:mmse-expression} into~\eqref{eq:mi-diff-entr-4} and~\eqref{eq:mi-diff-entr-4} into~\eqref{eq:iwanttoboundthiswithgmi}, and by noting that 
\begin{align}
	 \Delta_{u,k}(\hat{\matH}) = A_{u,k}(\hat{\matH})- \rho_{u,k}\abs{\Ex{}{g_{u,k}\given\hat{\matH}}}^{2},
\end{align}
we obtain~\eqref{eq:ml-bound}.
\section{Proof of Theorem~\ref{thm:snn-decoder}}
\label{app:proof-snn-decoder}

We start by defining
\begin{IEEEeqnarray}{rCl}
	q_{\lambda}
	\lefto(
		\xx{y}{j,u}{k}{\ell}
		\given
		\hat{\matH}^{(\ell)},
		\xx{\bar{s}}{j,u}{k}{\ell}
	\right)
	& = & \nonumber\\
	\IEEEeqnarraymulticol{3}{l}{
		\qquad
		\exp\lefto(
			-\lambda
			\abs{
				\xx{y}{j,u}{k}{\ell}
				-
				\hat g_{u,k}^{(\ell)}
				\xx{\bar{s}}{j,u}{k}{\ell}
			}^{2}
		\right),
	}
	\label{eq:app-single-letter-metric}
\end{IEEEeqnarray}
where $\lambda > 0$ is an auxiliary parameter which originates from a Chernoff step, and will be optimized later. In~\eqref{eq:app-single-letter-metric}, we used that  $\hat{g}^{(\ell)}_{u,k}$ is a deterministic function of $\hat{\matH}^{(\ell)}$.
Using~\eqref{eq:app-single-letter-metric}, we express the \gls{snn} decoding rule in terms of the maximization of the function 
\begin{equation}
	\begin{split}
		 & q_{\lambda}
		\lefto(
		\vecy_u
		\given
		\{\hat{\matH}^{(\ell)}\},
		\bar{\vecs}_u
		\right)        \\
		 & \quad =
		\prod_{\ell = 1}^{\nb}
		\prod_{j=1}^{\nd}
		\prod_{k\in\setS}
		q_{\lambda}
		\lefto(
		\xx{y}{j,u}{k}{\ell}
		\given
		\hat{\matH}^{(\ell)},\xx{\bar{s}}{j,u}{k}{\ell}
		\right) .
	\end{split}
	\label{eq:app-product-metric}
\end{equation}
Using this metric, one can show that 
the
rate achievable by user $u$ is lower-bounded as~\cite[Eq.~(10)]{kramer_information_2023}
\begin{IEEEeqnarray}{rCl}
R_u
& \geq &
\lim_{\nb\to\infty}
\frac{1}{\nb n\sub{c} S}
\sup_{\lambda>0}
\nonumber\\[-0.25ex]
&&
\Ex{}{\log
\frac{
q_{\lambda}
\lefto(
\vecy_u
\given
\{\hat{\matH}^{(\ell)}\},
\bar{\vecs}_u
\right)
}{
\Ex{\tilde{\vecs}_u}{
q_{\lambda}
\lefto(
\vecy_u
\given
\{\hat{\matH}^{(\ell)}\},
\tilde{\vecs}_u
\right)
}
}},
\label{eq:app-gmi-full}
\end{IEEEeqnarray}
where $\tilde{\vecs}_u$ is independent of $\bar{\vecs}_u$. 
Note that 
\begin{IEEEeqnarray}{rCl}
\IEEEeqnarraymulticol{3}{l}{
	\Ex{}{\log
		\frac{
			q_{\lambda}
			\lefto(
				\vecy_u
				\given
				\{\hat{\matH}^{(\ell)}\},
				\bar{\vecs}_u
			\right)
		}{
			\Ex{\tilde{\vecs}_u}{
				q_{\lambda}
				\lefto(
					\vecy_u
					\given
					\{\hat{\matH}^{(\ell)}\},
					\tilde{\vecs}_u
				\right)
			}
		}
	}
}
\notag
\\
& = &
\sum_{\ell = 1}^{\nb}
\sum_{j = 1}^{n\sub{d}}
\sum_{k \in \setS}
\Ex{}{\iota_{j,u,k}^{(\ell)}}
\IEEEeqnarraynumspace
\label{eq:first-expansion}
\\
& = &
\nb n\sub{d}
\sum_{k \in \setS}
\Ex{}{\iota_{1,u,k}^{(1)}},
\IEEEeqnarraynumspace
\label{eq:symmetry-reduction}
\end{IEEEeqnarray}
where
\begin{IEEEeqnarray}{rCl}
\iota_{j,u,k}^{(\ell)}
& = &
\log
\frac{
	q_{\lambda}
	\lefto(
		y_{j,u,k}^{(\ell)}
		\given
		\{\hat{\matH}^{(\ell)}\},
		\bar{s}_{j,u,k}^{(\ell)}
	\right)
}{
	\Ex{\tilde{s}_{j,u,k}^{(\ell)}}{
		q_{\lambda}
		\lefto(
			y_{j,u,k}^{(\ell)}
			\given
			\{\hat{\matH}^{(\ell)}\},
			\tilde{s}_{j,u,k}^{(\ell)}
		\right)
	}
}.
\label{eq:iota-definition}
\end{IEEEeqnarray}
Here,~\eqref{eq:first-expansion} follows because  the metric $q_\lambda(\cdot)$ factorizes across $\ell$, $j$, and $k$, and because $\tilde{\vecs}_u$ has \iid entries;~\eqref{eq:symmetry-reduction} holds because, for a given $k$, the expectation in~\eqref{eq:first-expansion} takes the same value for all $j \in \{1, \dots, n\sub{d}\}$ and $\ell \in \{1, \dots, \nb\}$.

We obtain~\eqref{eq:gmi-example} by evaluating the expectation in the denominator of~\eqref{eq:symmetry-reduction} in a similar manner to~\cite[App. A]{ostman21-10y}, and then by computing in closed-form the outer expectation with respect to $\bar{s}_{j,u,k}^{(\ell)}$ and $z_{j,u,k}^{(\ell)}$.
\bibliographystyle{IEEEtran}
\bibliography{one-bit}
\end{document}